\documentclass{article}
\usepackage[utf8]{inputenc}
\usepackage{mathrsfs}
\usepackage{amsmath}
\usepackage{amssymb}
\usepackage{amsthm}
\usepackage{authblk}
\usepackage{tikz}
\usetikzlibrary{automata, positioning}
\usepackage{float}
\usepackage{tabularx,lipsum,environ}
\usepackage{algpseudocode}
\usepackage{mathtools}% http://ctan.org/pkg/mathtools

\usepackage{hyperref}
\newtheorem{observation}{Observation}
\newtheorem{theorem}{Theorem}
\newtheorem{lemma}{Lemma}

\newtheorem{fact}{Fact}
\newtheorem{claim}{Claim}

\usepackage[T1]{fontenc}
\bibliographystyle{plainurl}

\title{Careful Synchronization of One-Cluster Automata}
\author{Jakub Ruszil}
\date{July 2023}

\begin{document}

\maketitle

\begin{abstract}
    In this paper we investigate careful synchronization of one-cluster partial automata. First we prove that in general case the shortest carefully synchronizing word for such automata is of length $2^\frac{n}{2} + 1$, where $n$ is the number of states of an automaton. Additionally we prove that checking whether a given one-cluster partial automaton is carefully synchronizing is NP-hard even in the case of binary alphabet.
\end{abstract}

\section{Introduction}
The concept of synchronization of finite automata is essential in various areas of computer science. It consists in regaining control over a system by applying a specific set of input instructions. These instructions lead the system to a fixed state no matter in which state it was at the beginning. The idea of synchronization has been studied for many classes of complete deterministic finite automata (DFA) \cite{berlinkov2014,berlinkov2016,eppstein1990,kari2001,kari2003,pin1983,rystsov1997,szykula2018,cerny1964,trahtman2007,volkov2008,volkov2019} and non-deterministic finite automata \cite{imreh1999,ito2004}. One of the most famous longstanding open problems in automata theory, known as \v{C}ern\'{y} Conjecture, states that for a given synchronizing DFA with $n$ states one can always find a synchronizing word of length at most $(n-1)^2$. This conjecture was proven for numerous classes of automata, but the problem is still not solved in general case. The concept of synchronization has been also considered in coding theory \cite{biskup2009,jurgensen2008}, parts orienting in manufacturing \cite{eppstein1990,natarajan1986}, testing of reactive systems \cite{sandberg2005} and Markov Decision Processes \cite{doyen2014,doyen2019}.

Allowing no outgoing transitions from some states for certain letters helps us to model a system for which certain actions cannot be accomplished while being in a specified state. This leads to the problem of finding a synchronizing word for a finite automaton, where transition function is not defined for all states. Notice that this is the most frequent case, if we use automata to model real-world systems. In practice, it rarely happens that a real system can be modeled with a DFA where transition function is total. The transition function is usually a partial one. This fact motivated many researchers to investigate the properties of partial finite automata relevant to practical problems of synchronization.\newline
We know that, in general case, checking if a partial automaton can be carefully synchronized is PSPACE-complete \cite{martyugin2010b} even for binary alphabet \cite{vorel2016}.
In this paper we investigate the case of deterministic finite automata such that transition from state to state is not necessary defined for all states. We refer to this model as \textit{partial finite automaton} (PFA). We will say that a word which synchronizes a PFA is \textit{carefully synchronizing} word. There exists also different definition of synchronization of PFAs, for example in here \cite{berlinkov_et_al:LIPIcs.STACS.2021.12}. The problem of estimating the length of a shortest carefully synchronizing word for PFA was considered, among others, by Rystsov \cite{rystsov1980}, Ito and Shikishima-Tsuji \cite{ito2004}, Martyugin \cite{martyugin2010,martyugin2013}, Gazdag et. al. \cite{gazdag2009} and de Bondt et. al. \cite{debondt2019}. Martyugin established a lower bound on the length of such words of size $\Theta(3^{n/3})$ and Rystsov \cite{rystsov1980} established an upper bound of size $O((3+ \epsilon)^{n/3})$, where $n$ is the number of states of the automaton.\newline
One-cluster DFAs are the ones having a letter that has only one simple cycle on the set of states. Synchronization of one-cluster DFA's was studied by Steinberg, B{\'e}al and Perrin and resulted in proving \v{C}ern\'{y} Conjecture for one-cluster automata with prime length cycle in \cite{STEINBERG20115487} and establishing quadratic upper bound for the length of the synchronizing word in general case in \cite{10.1007/978-3-642-02737-6_6} using linear algebra approach. However we also know that in general case deciding if an automaton is synchronizing is in P what is not true in the case of careful synchronization. Following that researches we extend the notion of one-cluster to partial automata. Our research is motivated by the fact, that all former constructions providing high careful synchronization threshold does not contain a letter which creates a cluster on the set of states. Constructions from \cite{martyugin2010,martyugin2013} have linear alphabet size in terms of cardinality of the set of states but only one letter is defined for all states and the number of clusters grows also linearly in the size of the set of states for that letter. Cases from \cite{vorel2016} and \cite{debondt2019} are more interesting as they are dealing with constant alphabet size, but constructions from that papers also do not define any letter to be a one cluster (one of families of automata from \cite{debondt2019}, used in the proof of Lemma 7 has one letter defined as a permutation on the set of states with three cycles). On the other hand binary partial one cluster automata have simple structural condition concerning non-cluster letter (stated in latter Observation \ref{observation:cluster_defined}) which gives the impression that deciding careful synchronizability of such automata is an easy problem in terms of complexity theory. Suprisingly we managed to show that this decision problem is hard even in such restricted case.  Our contribution is twofold. First we prove that in general case the shortest carefully synchronizing words for one-cluster partial automata can be of exponential size in terms of the number of states (but assuming alphabet is also of exponential size). From the complexity perspective we show that the problem of deciding if a given binary one-cluster PFA admits any carefully synchronizing word is NP-hard by reduction from 3-SAT problem.
\section{Preliminaries}
A \emph{partial finite automaton} (PFA) is an ordered tuple $\mathcal{A} = (\Sigma, Q, \delta)$ where $\Sigma$ is a finite set of letters, $Q$ is a finite set of states and $\delta:{Q \times \Sigma}\rightarrow{Q}$ is a transition function, not everywhere defined. For $\emph{w} \in \Sigma^\ast$ and $\emph{q} \in Q$ we define $\delta(\emph{q},\emph{w})$ inductively: $\delta(\emph{q},\epsilon) = q$ and $\delta(\emph{q},\emph{aw}) = \delta(\delta(\emph{q},\emph{a}), \emph{w})$ for $a \in \Sigma$, where $\epsilon$ is the empty word and $\delta(\emph{q}, \emph{a})$ is defined. Let $S \subseteq Q$. Define $\delta(S,w) = \bigcup_{q \in S} \delta(q,w)$. We write $q.w$ (respectively $S.w$ for a image of a set $S \subseteq Q$) instead of $\delta(q,w)$ (respectively $\delta(S,w)$) wherever it does not cause ambiguity. We also denote $S.w^{-1} = \{q \in Q: q.w \in S\}$ as a preimage of $S$ under the word $w$. A word $\emph{w} \in \Sigma^\ast$ is called \emph{carefully synchronizing} if there exists $\overline{q} \in Q$ such that for every $\emph{q} \in Q$, $\delta(\emph{q}, \emph{w}) = \overline{q}$ and all transitions are defined. A PFA is called \emph{carefully synchronizing} if it admits any carefully synchronizing word. For a given $\mathcal{A} = (Q, \Sigma, \delta)$ we define the \emph{power-automaton} $\mathcal{P}(\mathcal{A}) = (2^Q, \Sigma, \tau)$, where $2^Q$ stands for the set of all subsets of $Q$. Transition function $\tau:{2^Q \times \Sigma}\rightarrow{2^Q}$ is defined as follows: $\tau(Q',a) = \bigcup_{q \in Q'} \delta(q,a)$, $a \in \Sigma$, $Q' \subseteq Q$  if $\delta(q,a)$ is defined for all states in $q \in Q'$. Otherwise $\tau(Q',a)$ is not defined. We also consider a \textit{deterministic finite automaton} DFA. The only difference comparing to a PFA is that the transition function is total in this case. All definitions regarding PFA also apply to DFA, but we speak rather about \textit{synchronization} than \textit{careful synchronization} in the case of DFA. We can now state an obvious fact, useful to decide whether a given PFA is carefully synchronizing.
\begin{fact}
\label{fact:1}
Let $\mathcal{A}$ be a PFA and $\mathcal{P(A)}$ be its power automaton. Then $\mathcal{A}$ is synchronizing if and only if for some state $q \in Q$  there exists a labelled path in $\mathcal{P(A)}$ from $Q$ to $\{q\}$. The shortest synchronizing word for $\mathcal{A}$ corresponds to the shortest such labelled path in $\mathcal{P(A)}$. 
\end{fact}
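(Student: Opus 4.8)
The plan is to prove the statement by directly unwinding the definitions of careful synchronization and of the power automaton, the only subtlety being that the partiality of $\delta$ must be seen to be faithfully reflected by the partiality of $\tau$.

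First I would establish, by induction on $|w|$, the following bookkeeping claim: for $Q' \subseteq Q$ and $w \in \Sigma^\ast$, the value $\tau(Q',w)$ is defined if and only if $\delta(q,w)$ is defined for every $q \in Q'$ (equivalently, every transition used while reading $w$ from every state of $Q'$ exists), and in that case $\tau(Q',w) = \delta(Q',w)$. The base case $w = \epsilon$ is immediate, and the inductive step $w = aw'$ follows from the definition $\tau(Q',a) = \bigcup_{q \in Q'} \delta(q,a)$, which is defined exactly when each $\delta(q,a)$ is, combined with the inductive hypothesis applied to $\tau(Q',a)$ and $w'$. Note also that if some proper prefix $u$ of $w$ already has $\tau(Q',u)$ undefined, then $\tau(Q',w)$ is undefined as well, so a word readable from $Q$ in $\mathcal{P(A)}$ is readable along every one of its prefixes.

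For the forward direction, suppose $\mathcal{A}$ is carefully synchronizing via a word $w$: there is $\overline{q} \in Q$ with $\delta(q,w) = \overline{q}$ and all transitions defined, for every $q \in Q$. By the claim, $\tau(Q,w)$ is defined and equals $\{\overline{q}\}$, and reading $w$ letter by letter in $\mathcal{P(A)}$ then traces a labelled path from $Q$ to $\{\overline{q}\}$, since by the prefix remark every intermediate set $\tau(Q,u)$ is defined. For the converse, a labelled path in $\mathcal{P(A)}$ from $Q$ to $\{q\}$ spells some word $w$ with $\tau(Q,w)$ defined and equal to $\{q\}$; by the claim every transition is defined and $\delta(p,w) = q$ for all $p \in Q$, so $w$ is carefully synchronizing with target $q$.

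Finally, since the length of the word spelled by a path equals the number of edges of that path, the correspondence $w \leftrightarrow (\text{path from } Q \text{ to } \{q\})$ is length-preserving, so a shortest carefully synchronizing word corresponds to a shortest labelled path in $\mathcal{P(A)}$ from $Q$ to a singleton, and conversely. The only place calling for any care — and the closest thing to an obstacle — is making this bookkeeping about partial transitions precise: the clause ``all transitions are defined'' in the definition of careful synchronization is exactly the condition under which the corresponding walk in $\mathcal{P(A)}$ never leaves the domain of $\tau$, and once the inductive claim above is in place this identification is routine.
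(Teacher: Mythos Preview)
Your proposal is correct. The paper does not actually prove Fact~\ref{fact:1}: it is introduced as ``an obvious fact'' and left without argument, so your careful inductive bookkeeping---matching the partiality of $\delta$ with that of $\tau$ and then reading off both directions and the length correspondence---is already more than the paper supplies.
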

An example of the carefully synchronizing automaton $\mathcal{A}_{car}$ is depicted in Fig. \ref{fig:example}. Its shortest carefully synchronizing word $w_{car}$ is $abc(ab)^2 c^2a$, which can be easily checked via the power automaton construction.
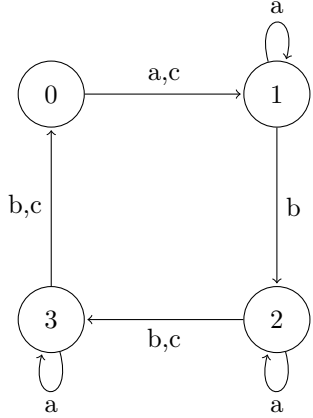
\begin{figure}[H]
    \centering
    \begin{tikzpicture}[shorten >=1pt,node distance=3cm,on     grid,auto] 
        \node[state] (0)   {$0$}; 
        \node[state] (1) [right=of 0] {$1$}; 
        \node[state] (2) [below=of 1] {$2$}; 
        \node[state] (3) [left=of 2] {$3$};
        \path[->] 
        (0) edge node {a,c} (1)
        (1) edge [loop above] node {a} ()
            edge node {b} (2)
        (2) edge [loop below] node {a} ()
            edge node {b,c} (3)
        (3) edge [loop below] node {a} ()
            edge node {b,c} (0);

        \end{tikzpicture}
    \caption{A carefully synchronizing $\mathcal{A}_{car}$}
    \label{fig:example}
\end{figure}
Let $\mathcal{L}_n = \{\mathcal{A} = (\Sigma, Q, \delta): \mathcal{A}\;is\;carefully\;synchronizing\;and\;|Q| = n\}$. Notice that $\mathcal{L}_n$ does not depend on alphabet size. We define $d(\mathcal{A}) = \min\{|w|:w\; is\; a\; carefully\; synchronizing\:word\; for\; \mathcal{A}\}$ and $d(n) = \max\{d(\mathcal{A}) : \mathcal{A} \in \mathcal{L}_n\}$.\newline
It can be easily verified from Fig \ref{fig:example}. that the \v{C}ern\'y Conjecture is not true for PFAs, since $d(\mathcal{A}_{car}) = 10 > (4-1)^2 = 9$.\newline
For $\mathcal{A} = (Q, \Sigma, \delta)$ and $a \in \Sigma$ defined for all the states let $\mathcal{G}_a = (Q, E)$ be a directed graph such that $(p,q) \in E$ if, and only if $p.a = q$. Put differently $\mathcal{G}_a$ is a digraph made of only the edges labelled by $a$ in $\mathcal{A}$. The graph $\mathcal{G}_a$ is a disjoint union of weakly connected components called $a$-\textit{clusters}. Since each state has only one outgoing edge in $\mathcal{G}_a$, each $a$-cluster contains a unique cycle, called $a$-\textit{cycle} with trees attached to a cycle at their roots. For each $p \in Q$ we define a \textit{level} of $p$ as a distance between $p$ and the root of a tree containing $p$ in $\mathcal{G}_a$. The \textit{level} of $\mathcal{G}_a$ is the maximal level of its vertices. An example of the $a$-cluster is shown in Figure \ref{fig:cluster}.
One can easily deduce from Fact \ref{fact:1} that at least one letter $a \in \Sigma$ must be defined for all the states of the PFA in order for it to be carefully synchronizing. We define a \textit{one-cluster} PFA with respect to a letter $a$ as a PFA which has only one $a$-cluster. We also refer to a \textit{one-cluster} PFA as an automaton which is one-cluster with respect to any of its letters.
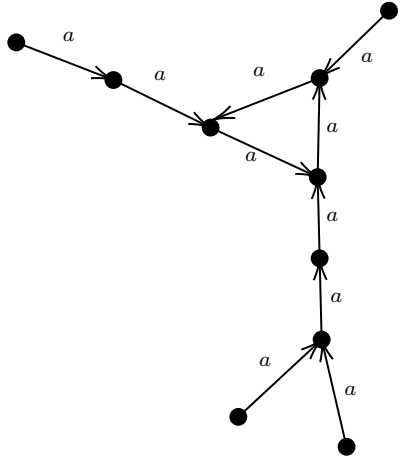
\begin{figure}[H]
    \centering
    \tikzset{every picture/.style={line width=0.75pt}} %set default line width to 0.75pt        

\begin{tikzpicture}[x=0.75pt,y=0.75pt,yscale=-1,xscale=1]
%uncomment if require: \path (0,463); %set diagram left start at 0, and has height of 463

%Shape: Circle [id:dp2941991088027397] 
\draw  [fill={rgb, 255:red, 0; green, 0; blue, 0 }  ,fill opacity=1 ] (208,126) .. controls (208,123.79) and (209.79,122) .. (212,122) .. controls (214.21,122) and (216,123.79) .. (216,126) .. controls (216,128.21) and (214.21,130) .. (212,130) .. controls (209.79,130) and (208,128.21) .. (208,126) -- cycle ;
%Shape: Circle [id:dp3180639002200104] 
\draw  [fill={rgb, 255:red, 0; green, 0; blue, 0 }  ,fill opacity=1 ] (271,296) .. controls (271,293.79) and (272.79,292) .. (275,292) .. controls (277.21,292) and (279,293.79) .. (279,296) .. controls (279,298.21) and (277.21,300) .. (275,300) .. controls (272.79,300) and (271,298.21) .. (271,296) -- cycle ;
%Shape: Circle [id:dp27103585527862284] 
\draw  [fill={rgb, 255:red, 0; green, 0; blue, 0 }  ,fill opacity=1 ] (259.95,153.86) .. controls (257.82,153.28) and (256.56,151.08) .. (257.14,148.95) .. controls (257.72,146.82) and (259.92,145.56) .. (262.05,146.14) .. controls (264.18,146.72) and (265.44,148.92) .. (264.86,151.05) .. controls (264.28,153.18) and (262.08,154.44) .. (259.95,153.86) -- cycle ;
%Shape: Circle [id:dp7889900870133144] 
\draw  [fill={rgb, 255:red, 0; green, 0; blue, 0 }  ,fill opacity=1 ] (312,216) .. controls (312,213.79) and (313.79,212) .. (316,212) .. controls (318.21,212) and (320,213.79) .. (320,216) .. controls (320,218.21) and (318.21,220) .. (316,220) .. controls (313.79,220) and (312,218.21) .. (312,216) -- cycle ;
%Shape: Circle [id:dp09968572811110754] 
\draw  [fill={rgb, 255:red, 0; green, 0; blue, 0 }  ,fill opacity=1 ] (313,257) .. controls (313,254.79) and (314.79,253) .. (317,253) .. controls (319.21,253) and (321,254.79) .. (321,257) .. controls (321,259.21) and (319.21,261) .. (317,261) .. controls (314.79,261) and (313,259.21) .. (313,257) -- cycle ;
%Shape: Circle [id:dp8476290458663555] 
\draw  [fill={rgb, 255:red, 0; green, 0; blue, 0 }  ,fill opacity=1 ] (311,175) .. controls (311,172.79) and (312.79,171) .. (315,171) .. controls (317.21,171) and (319,172.79) .. (319,175) .. controls (319,177.21) and (317.21,179) .. (315,179) .. controls (312.79,179) and (311,177.21) .. (311,175) -- cycle ;
%Shape: Circle [id:dp6987097168178574] 
\draw  [fill={rgb, 255:red, 0; green, 0; blue, 0 }  ,fill opacity=1 ] (325.57,311.11) .. controls (325.57,308.9) and (327.36,307.11) .. (329.57,307.11) .. controls (331.78,307.11) and (333.57,308.9) .. (333.57,311.11) .. controls (333.57,313.32) and (331.78,315.11) .. (329.57,315.11) .. controls (327.36,315.11) and (325.57,313.32) .. (325.57,311.11) -- cycle ;
%Shape: Circle [id:dp004332850030302482] 
\draw  [fill={rgb, 255:red, 0; green, 0; blue, 0 }  ,fill opacity=1 ] (312,125) .. controls (312,122.79) and (313.79,121) .. (316,121) .. controls (318.21,121) and (320,122.79) .. (320,125) .. controls (320,127.21) and (318.21,129) .. (316,129) .. controls (313.79,129) and (312,127.21) .. (312,125) -- cycle ;
%Straight Lines [id:da013517209315581091] 
\draw    (261,150) -- (313.19,174.16) ;
\draw [shift={(315,175)}, rotate = 204.84] [color={rgb, 255:red, 0; green, 0; blue, 0 }  ][line width=0.75]    (10.93,-3.29) .. controls (6.95,-1.4) and (3.31,-0.3) .. (0,0) .. controls (3.31,0.3) and (6.95,1.4) .. (10.93,3.29)   ;
%Shape: Circle [id:dp6952882331702606] 
\draw  [fill={rgb, 255:red, 0; green, 0; blue, 0 }  ,fill opacity=1 ] (159,107) .. controls (159,104.79) and (160.79,103) .. (163,103) .. controls (165.21,103) and (167,104.79) .. (167,107) .. controls (167,109.21) and (165.21,111) .. (163,111) .. controls (160.79,111) and (159,109.21) .. (159,107) -- cycle ;
%Straight Lines [id:da07354492675012358] 
\draw    (315,175) -- (315.96,127) ;
\draw [shift={(316,125)}, rotate = 91.15] [color={rgb, 255:red, 0; green, 0; blue, 0 }  ][line width=0.75]    (10.93,-3.29) .. controls (6.95,-1.4) and (3.31,-0.3) .. (0,0) .. controls (3.31,0.3) and (6.95,1.4) .. (10.93,3.29)   ;
%Straight Lines [id:da005860642727516185] 
\draw    (316,125) -- (263.91,145.41) ;
\draw [shift={(262.05,146.14)}, rotate = 338.6] [color={rgb, 255:red, 0; green, 0; blue, 0 }  ][line width=0.75]    (10.93,-3.29) .. controls (6.95,-1.4) and (3.31,-0.3) .. (0,0) .. controls (3.31,0.3) and (6.95,1.4) .. (10.93,3.29)   ;
%Straight Lines [id:da6176218449890993] 
\draw    (316,216) -- (315.05,177) ;
\draw [shift={(315,175)}, rotate = 88.6] [color={rgb, 255:red, 0; green, 0; blue, 0 }  ][line width=0.75]    (10.93,-3.29) .. controls (6.95,-1.4) and (3.31,-0.3) .. (0,0) .. controls (3.31,0.3) and (6.95,1.4) .. (10.93,3.29)   ;
%Straight Lines [id:da4214026541920318] 
\draw    (317,257) -- (316.05,218) ;
\draw [shift={(316,216)}, rotate = 88.6] [color={rgb, 255:red, 0; green, 0; blue, 0 }  ][line width=0.75]    (10.93,-3.29) .. controls (6.95,-1.4) and (3.31,-0.3) .. (0,0) .. controls (3.31,0.3) and (6.95,1.4) .. (10.93,3.29)   ;
%Straight Lines [id:da4817882925563951] 
\draw    (329.57,311.11) -- (317.45,258.95) ;
\draw [shift={(317,257)}, rotate = 76.93] [color={rgb, 255:red, 0; green, 0; blue, 0 }  ][line width=0.75]    (10.93,-3.29) .. controls (6.95,-1.4) and (3.31,-0.3) .. (0,0) .. controls (3.31,0.3) and (6.95,1.4) .. (10.93,3.29)   ;
%Straight Lines [id:da7925972889069398] 
\draw    (275,296) -- (315.53,258.36) ;
\draw [shift={(317,257)}, rotate = 137.12] [color={rgb, 255:red, 0; green, 0; blue, 0 }  ][line width=0.75]    (10.93,-3.29) .. controls (6.95,-1.4) and (3.31,-0.3) .. (0,0) .. controls (3.31,0.3) and (6.95,1.4) .. (10.93,3.29)   ;
%Straight Lines [id:da21477620545837717] 
\draw    (212,126) -- (259.2,149.12) ;
\draw [shift={(261,150)}, rotate = 206.1] [color={rgb, 255:red, 0; green, 0; blue, 0 }  ][line width=0.75]    (10.93,-3.29) .. controls (6.95,-1.4) and (3.31,-0.3) .. (0,0) .. controls (3.31,0.3) and (6.95,1.4) .. (10.93,3.29)   ;
%Straight Lines [id:da21277446886630214] 
\draw    (163,107) -- (210.14,125.28) ;
\draw [shift={(212,126)}, rotate = 201.19] [color={rgb, 255:red, 0; green, 0; blue, 0 }  ][line width=0.75]    (10.93,-3.29) .. controls (6.95,-1.4) and (3.31,-0.3) .. (0,0) .. controls (3.31,0.3) and (6.95,1.4) .. (10.93,3.29)   ;
%Shape: Circle [id:dp9458558287039993] 
\draw  [fill={rgb, 255:red, 0; green, 0; blue, 0 }  ,fill opacity=1 ] (347,91) .. controls (347,88.79) and (348.79,87) .. (351,87) .. controls (353.21,87) and (355,88.79) .. (355,91) .. controls (355,93.21) and (353.21,95) .. (351,95) .. controls (348.79,95) and (347,93.21) .. (347,91) -- cycle ;
%Straight Lines [id:da2860404990962959] 
\draw    (351,91) -- (317.43,123.61) ;
\draw [shift={(316,125)}, rotate = 315.83] [color={rgb, 255:red, 0; green, 0; blue, 0 }  ][line width=0.75]    (10.93,-3.29) .. controls (6.95,-1.4) and (3.31,-0.3) .. (0,0) .. controls (3.31,0.3) and (6.95,1.4) .. (10.93,3.29)   ;

% Text Node
\draw (185,100) node [anchor=north west][inner sep=0.75pt]  [font=\footnotesize] [align=left] {$\displaystyle a$};
% Text Node
\draw (231,120) node [anchor=north west][inner sep=0.75pt]  [font=\footnotesize] [align=left] {$\displaystyle a$};
% Text Node
\draw (281,118) node [anchor=north west][inner sep=0.75pt]  [font=\footnotesize] [align=left] {$\displaystyle a$};
% Text Node
\draw (277,161) node [anchor=north west][inner sep=0.75pt]  [font=\footnotesize] [align=left] {$\displaystyle a$};
% Text Node
\draw (318,146) node [anchor=north west][inner sep=0.75pt]  [font=\footnotesize] [align=left] {$\displaystyle a$};
% Text Node
\draw (318,191) node [anchor=north west][inner sep=0.75pt]  [font=\footnotesize] [align=left] {$\displaystyle a$};
% Text Node
\draw (320,232) node [anchor=north west][inner sep=0.75pt]  [font=\footnotesize] [align=left] {$\displaystyle a$};
% Text Node
\draw (284,264) node [anchor=north west][inner sep=0.75pt]  [font=\footnotesize] [align=left] {$\displaystyle a$};
% Text Node
\draw (327,279) node [anchor=north west][inner sep=0.75pt]  [font=\footnotesize] [align=left] {$\displaystyle a$};
% Text Node
\draw (335.5,111) node [anchor=north west][inner sep=0.75pt]  [font=\footnotesize] [align=left] {$\displaystyle a$};

\end{tikzpicture}
    \caption{Example of the $a$-cluster}
    \label{fig:cluster}
\end{figure}
Let $\mathcal{A} = (Q, \Sigma, \delta)$ be a PFA, $a \in \Sigma$, $S \subset Q$ and $G = (V,E)$ be a digraph. We say that \emph{$S$ induces $G$ on the letter $a$ in $\mathcal{A}$}, if a set $S$ induces in $\mathcal{G}_a$ a graph isomorphic to $G$. Whenever not stated differently in the paper, when we refer to a one-cluster automata, we assume that it is one-cluster with respect to a letter $a$.
Before moving further we state preliminary observation and lemma that can shed more light on the matter of synchronization and careful synchronization of one-cluster automata. 
\begin{observation}
\label{observation:cluster_defined}
If one-cluster automaton $\mathcal{A} = (Q, \Sigma, \delta)$ is carefully synchronizing, $C$ induces a directed cycle on a letter $a$, letter $a$ is the only one defined for all the states and $|C| > 1$, then there exist letter $b \in \Sigma$ such that $b \neq a$ and for any $q \in C$ it holds that $\delta(q, b)$ is defined.
\end{observation}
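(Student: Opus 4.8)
The plan is to exploit that on the cycle $C$ the letter $a$ acts as a cyclic permutation, so that iterating $a$ alone can never merge two states of $C$; consequently any carefully synchronizing word must apply some letter other than $a$ to a set that still contains all of $C$, and such a letter is then forced to be defined on every state of $C$.

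I would first record the elementary fact about a one-cluster letter: if $\ell$ denotes the level of $\mathcal{G}_a$, then $Q.a^{k}=C$ for every $k\ge\ell$. Indeed, after $\ell$ applications of $a$ every state has been pushed onto the $a$-cycle, so $Q.a^{\ell}\subseteq C$; and since $a|_{C}$ is a permutation of $C$ we have $C=C.a^{\ell}\subseteq Q.a^{\ell}$, hence $Q.a^{\ell}=C$, and $C.a=C$ propagates the equality to all larger exponents.

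Next, fix a carefully synchronizing word $w$ (which exists by hypothesis) and put $u=a^{\ell}w$. This word is again carefully synchronizing: $Q.a^{\ell}=C\subseteq Q$, reading $w$ starting from $C$ uses only transitions that are already defined from all of $Q$, and $C.w\subseteq Q.w=\{\overline{q}\}$, so $|Q.u|=1$. Now $u$ is not a power of $a$, since $Q.a^{N}\supseteq C.a^{N}=C$ has more than one element for every $N$, contradicting $|Q.u|=1$. Hence $u$ has a first letter distinct from $a$: write $u=a^{t}\,b\,v$ with $b\neq a$ and $v\in\Sigma^{\ast}$, and observe $t\ge\ell$ because the prefix $a^{\ell}$ of $u$ consists of $a$'s. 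By the previous step $Q.a^{t}=C$, so while reading $u$ the letter $b$ is applied precisely to the set $C$; since $u$ is carefully synchronizing, every transition it uses is defined, and in particular $\delta(q,b)$ is defined for all $q\in C$. As $b\neq a$, this is exactly the desired conclusion.

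I do not expect a real obstacle here; the only delicate points are the reduction to $u=a^{\ell}w$ — needed so that the subset reached after any $a$-prefix of length at least $\ell$ is exactly $C$ — and the remark that it is the cyclicity of $a$ on $C$, and not on $Q$ (where $a$ is generally not a permutation), that prevents the current subset from shrinking below $C$ while only the letter $a$ is read.
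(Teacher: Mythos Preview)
Your argument is correct and follows essentially the same line as the paper's: both observe that $C\subseteq Q.a^{k}$ for every $k$, so that when the first non-$a$ letter of a carefully synchronizing word is read the current set still contains all of $C$, forcing that letter to be defined on every state of $C$. The only cosmetic difference is that the paper uses the hypothesis ``$a$ is the sole totally defined letter'' to conclude the synchronizing word already begins with $a$, whereas you manufacture the same situation by prepending $a^{\ell}$; your detour buys you the exact equality $Q.a^{t}=C$ at the moment $b$ is applied, but this is not needed since the inclusion $C\subseteq Q.a^{t}$ already suffices.
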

\begin{proof}
Obviously, since $a$ is the only letter defined for all the states, then every synchronizing word must start with it. On the other hand for all $k \in \mathbb{N}$ holds $C \subseteq Q.a^k$ and if $k \geq l$, where $l$ is the level of $\mathcal{G}_a$, then $Q.a^k = C$. If $|C| > 1$, then there must exits another letter $b \in \Sigma$ such that $C.b \neq C$ in order for $\mathcal{A}$ to be carefully synchronizing.
\end{proof}
Now we state and proof a lemma that shows, that if there exist a word $w$ that shrinks the set of the states that induces a directed cycle on $a$, then there exist a word $w'$ (polynomial in $|w|$) that shrinks the same subset by a half.
\begin{lemma}
\label{lemma:shrink_by_half}
Let $\mathcal{A} = (Q, \Sigma, \delta)$ be a one-cluster PFA, $|Q| = n$, let $C$ induce a directed cycle on a letter $a$ and $l$ be the level of $\mathcal{G}_a$. If there exist a word $w$ such that $|C.w| < |C|$ then there exist a word $w'$ such that $|Q.w'| \leq \lfloor \frac{1}{2} |C| \rfloor$ of length at most $n + \frac{1}{2}|C|(|w| + |C|)$.
\end{lemma}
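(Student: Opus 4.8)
The plan is to push $Q$ onto the $a$-cycle $C$ with $a^{l}$, and then alternate rotating inside $C$ (by powers of $a$) with applying $w$, each round deleting one state of $C$, until fewer than half of $C$ survives. Set $m=|C|$ and identify $C$ with $\mathbb Z_m$. Since $\mathcal A$ is one-cluster with respect to $a$ (and $a$, as in the definition of $\mathcal G_a$, is defined on all of $Q$), $a$ acts on $C$ as a single $m$-cycle, so on subsets of $C$ the map $S\mapsto S.a^{j}$ is translation by $j$; also $Q.a^{l}=C$ and $S.a^{l}\subseteq C$ for every $S\subseteq Q$. From $|C.w|<m$ pick $c_p\ne c_q$ in $C$ with $c_p.w=c_q.w$, and put $d=(q-p)\bmod m\in\{1,\dots ,m-1\}$.

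The combinatorial engine is: if $S\subseteq\mathbb Z_m$ and $|S|>\lfloor m/2\rfloor$, then $S$ contains a pair $x,\,x+d$. Indeed the circulant graph on $\mathbb Z_m$ joining each $x$ to $x\pm d$ is $2$-regular — a vertex-disjoint union of cycles covering all $m$ vertices, degenerating to a perfect matching when $d=m/2$ — so an independent set in it has at most $\lfloor m/2\rfloor$ vertices. Hence whenever $S\subseteq C$ with $|S|>\lfloor m/2\rfloor$ there is $j<m$ with $\{c_p,c_q\}\subseteq S.a^{j}$; then $w$ collapses this pair and $a^{l}$ returns the image to $C$, so $S.a^{j}\,w\,a^{l}\subseteq C$ and $|S.a^{j}\,w\,a^{l}|\le|S|-1$. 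Note $w$ is only ever applied to a subset of $C$, where it is defined by hypothesis, so every transition along the way is defined.

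Now iterate: $S_0=Q.a^{l}=C$, and while $|S_i|>\lfloor m/2\rfloor$ choose $j_{i+1}<m$ as above and set $S_{i+1}=S_i.a^{j_{i+1}}\,w\,a^{l}$. Each step decreases the size by at least one, so after $T\le\lceil m/2\rceil$ rounds we reach $|S_T|\le\lfloor m/2\rfloor$, witnessed by $w'=a^{l}\prod_{i=1}^{T}\bigl(a^{j_i}w\,a^{l}\bigr)$, which maps $Q$ onto $S_T$. (If $m\le 2$, or more generally if already $|C.w|\le\lfloor m/2\rfloor$, one round suffices: $w'=a^{l}w$.) Finally one estimates $|w'|$: it contains $T\le\tfrac12|C|$ occurrences of $w$ and $T$ rotations each shorter than $|C|$, which together give the $\tfrac12|C|(|w|+|C|)$ term, and the powers of $a$ used to enter and re-enter $C$ are what must be compressed into the single additive $n$.

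That last point is the real obstacle. Written naively, each of the $\sim|C|/2$ rounds ends with a full $a^{l}$, and since $l$ can be as large as $n-|C|$ this alone spends roughly $\tfrac12|C|\cdot l$ letters — more than the target allows. One therefore has to be frugal about returning to the cycle: the expensive $a^{l}$ should be paid once, at the front; thereafter the working set stays inside $C$, so each application of $w$ lands in $C.w$ and can be pulled back to $C$ by a power of $a$ no longer than the level of $C.w$, which must then be bounded against $l$ and absorbed, together with the single leading $a^{l}$, into the term $n$. Making this bookkeeping precise while keeping every per-round rotation below $|C|$, together with the small-$|C|$ boundary cases, is where the work lies; the reduction to $C$ and the pair-at-distance-$d$ lemma are routine.
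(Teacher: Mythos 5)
Your strategy is the same as the paper's: apply $a^{l}$ to collapse $Q$ onto the cycle $C$, extract from $|C.w|<|C|$ a merged pair $\bar p,\bar q$ at cyclic distance $d$, show that any subset of $C$ of size greater than $\tfrac12|C|$ contains a pair at distance $d$, and iterate rotate--apply-$w$ until at most $\lfloor\tfrac12|C|\rfloor$ states remain. Your combinatorial core is proved via independent sets in the circulant graph $\mathrm{Circ}(m,\{d\})$ rather than the paper's shift-and-pigeonhole argument (translate $C'$ by $d$ and observe the two translates cannot be disjoint); these are equivalent and both correct, and the rest of your setup ($Q.a^{l}=C$, the alignment rotation $j<m$, the decrement $|S.a^{j}wa^{l}|\le|S|-1$) is sound.

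However, the proposal does not actually prove the stated length bound, and you say so yourself: writing $w'=a^{l}\prod_{i}(a^{j_i}wa^{l})$ costs roughly $T\cdot l$ extra letters with $T\approx\tfrac12|C|$, which can exceed $n+\tfrac12|C|(|w|+|C|)$ by an unbounded amount when $l$ is large, and you leave the ``compression into the single additive $n$'' as acknowledged future bookkeeping. Since the quantitative bound is the entire content of the lemma, this is a genuine gap, not a presentational one. For comparison, the paper's construction sidesteps the issue by form: its final word is $a^{l}w(a^{m_1}w)^{m_2}$ with a \emph{single} leading $a^{l}$ and only rotations $a^{m_1}$, $m_1\le|C|$, between successive applications of $w$ --- which is exactly what the target bound requires --- but it does not justify why the working set is back on the cycle (so that Claim~\ref{claim_C_subset} applies and a rotation of length at most $|C|$ suffices) after each application of $w$, given that $C.w$ need not be contained in $C$. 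So you have correctly located the one step that needs a real argument; an attempt that stops at locating it does not establish the lemma.
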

\begin{proof}
For any $p, q \in C$ define $\text{dist}_{C}(p,q) = \text{min}\{k_1,k_2 : p.a^{k_1} = q \land q.a^{k_2} = p \land k_1, k_2 < |C|\}$. Since $p,q \in C$, then $dist_C(p,q)$ is always well defined and for all $p,q$ holds $dist_C(p,q) \leq \lfloor \frac{1}{2} |C| \rfloor$. Since there exist $w$ such that $|C| > |C.w|$, then there must exist $\bar{p}, \bar{q} \in C$ such that $\bar{p}.w = \bar{q}.w$. Denote $\text{dist}_C(\bar{p}, \bar{q}) = k$. We will show claim:\begin{claim}\label{claim_C_subset} For any $C' \subset C$ such that $|C'| > \frac{1}{2}|C|$, there exist $p', q' \in C$, such that $\text{dist}(p',q') = k$.\end{claim} 
\begin{proof}
For the sake of contradiction suppose that there exist $C' \subset C$ and $|C'| > \frac{1}{2}|C|$ such that for all $p,q \in C'$ holds $\text{dist}(p,q) \neq k$. Let $S = \{q_{i + k \text{ mod } m } : q_i \in C'\}$. Since for any $q_i \in C'$ there exist exactly one state $q_{i + k \text{ mod } m } \in C$, then $|C'| = |S|$. On the other hand, since for all $p,q \in C'$ holds $\text{dist}(p,q) \neq k$, then $C \cap S = \varnothing$. But since $|C'| = |S| > \frac{1}{2}|C|$ and $C' \subset C$ and $S \subset C$ then $|C'| + |S| > |C|$ so there must hold $C' \cap S \neq \varnothing$ and we have a contradiction.
\end{proof}
We can construct a word $w'$ in the following way:
Obviously $Q.a^l = C$ and $|Q.a^lw| < |C|$. If also $|Q.a^lw| \leq \lceil \frac{1}{2} |C| \rceil$ then result holds. Otherwise observe that $Q.a^lwa^l \subset C$ and $|Q.a^lwa^l| > \frac{1}{2} |C|$ so we can apply Claim \ref{claim_C_subset} to find states $p',q' \in Q.a^lwa^l$ such that $\text{dist}(p',q') = k = \text{dist}(\bar{p},\bar{q})$. Denote $u = a^lwa^l$ and observe that for some $m_1 \leq |C| < n$ it must hold that $\bar{p}, \bar{q} \in Q.ua^{m_1}$ so $|Q.ua^{m_1}w| < |Q.u|$. We can apply Claim \ref{claim_C_subset} and word $a^{m_1}w$ as long as the size of the resulting set is greater than $\frac{1}{2}|C|$. We obtain the word $w' = a^lw(a^{m_1}w)^m_2$ where $m_2 < \frac{1}{2}|C| - 1$ so the result holds.
\end{proof}

\section{Long shortest carefully synchronizing word}
\label{section:lower_bounds}
In this section we construct an infinite family of carefully synchronizing, one-cluster PFAs with exponential shortest carefully synchronizing word. The scheme of the proof is similar as in \cite{ito2004} (Proposition 8), however it differs in details. In order to simplify the proofs we assume, that the size of the set of states is $n = 2k$, where $k \in \mathbb{N}$ but proofs can be easily adapted to the case $n = 2k + 1$ achieving similar results. Let $C_k = \{c_1, \ldots, c_k\}$ and $T_k = \{t_1, \ldots , t_k\}$, $Q_k = C_k \cup  T_k$ and $\Sigma_k = \{a\} \cup \Sigma_k'$ where $\Sigma_k'$ is a set of letters specified later on. Finally let $\mathcal{B}_k = (Q_k, \Sigma_k, \delta_k)$. Define action of the letter $a$ on the set of states:
\begin{itemize}
    \item $\delta_k(c_i, a) = c_{i+1}$ for $i < k$
    \item $\delta_k(c_n, a) = c_1$
    \item $\delta_k(t_i, a) = c_i$
\end{itemize}
It is easy to observe that $\mathcal{B}_k$ is one-cluster with respect to $a$ for every $k$. Example of that cluster for $k=3$ is depicted below.
\begin{figure}[H]
    \centering
    \tikzset{every picture/.style={line width=0.75pt}} %set default line width to 0.75pt        

\begin{tikzpicture}[x=0.75pt,y=0.75pt,yscale=-1,xscale=1]
%uncomment if require: \path (0,463); %set diagram left start at 0, and has height of 463

%Straight Lines [id:da937837160062463] 
\draw    (271.38,198.59) -- (288.27,223.93) ;
\draw [shift={(289.38,225.59)}, rotate = 236.31] [color={rgb, 255:red, 0; green, 0; blue, 0 }  ][line width=0.75]    (10.93,-3.29) .. controls (6.95,-1.4) and (3.31,-0.3) .. (0,0) .. controls (3.31,0.3) and (6.95,1.4) .. (10.93,3.29)   ;
%Shape: Circle [id:dp8476290458663555] 
\draw  [fill={rgb, 255:red, 255; green, 255; blue, 255 }  ,fill opacity=1 ] (286.09,230.87) .. controls (288.13,226.54) and (293.29,224.68) .. (297.62,226.71) .. controls (301.95,228.75) and (303.81,233.91) .. (301.77,238.25) .. controls (299.74,242.58) and (294.57,244.44) .. (290.24,242.4) .. controls (285.91,240.36) and (284.05,235.2) .. (286.09,230.87) -- cycle ;
%Straight Lines [id:da6176218449890993] 
\draw    (265.17,146.15) -- (264.72,178.55) ;
\draw [shift={(264.69,180.55)}, rotate = 270.81] [color={rgb, 255:red, 0; green, 0; blue, 0 }  ][line width=0.75]    (10.93,-3.29) .. controls (6.95,-1.4) and (3.31,-0.3) .. (0,0) .. controls (3.31,0.3) and (6.95,1.4) .. (10.93,3.29)   ;
%Shape: Circle [id:dp37746658413234824] 
\draw  [fill={rgb, 255:red, 255; green, 255; blue, 255 }  ,fill opacity=1 ] (233.71,229.44) .. controls (235.75,225.11) and (240.91,223.25) .. (245.25,225.29) .. controls (249.58,227.32) and (251.44,232.49) .. (249.4,236.82) .. controls (247.36,241.15) and (242.2,243.01) .. (237.87,240.97) .. controls (233.54,238.94) and (231.68,233.77) .. (233.71,229.44) -- cycle ;
%Shape: Circle [id:dp942053002261241] 
\draw  [fill={rgb, 255:red, 255; green, 255; blue, 255 }  ,fill opacity=1 ] (256.78,186.33) .. controls (258.82,182) and (263.98,180.14) .. (268.31,182.17) .. controls (272.64,184.21) and (274.5,189.37) .. (272.47,193.71) .. controls (270.43,198.04) and (265.27,199.9) .. (260.93,197.86) .. controls (256.6,195.82) and (254.74,190.66) .. (256.78,186.33) -- cycle ;
%Shape: Circle [id:dp8287108393346817] 
\draw  [fill={rgb, 255:red, 255; green, 255; blue, 255 }  ,fill opacity=1 ] (259.03,143.96) .. controls (255.26,141.01) and (254.59,135.57) .. (257.54,131.8) .. controls (260.49,128.02) and (265.94,127.36) .. (269.71,130.31) .. controls (273.48,133.26) and (274.14,138.7) .. (271.19,142.47) .. controls (268.24,146.24) and (262.8,146.91) .. (259.03,143.96) -- cycle ;
%Straight Lines [id:da6428758818202404] 
\draw    (337,256.67) -- (303.55,239.17) ;
\draw [shift={(301.77,238.25)}, rotate = 27.61] [color={rgb, 255:red, 0; green, 0; blue, 0 }  ][line width=0.75]    (10.93,-3.29) .. controls (6.95,-1.4) and (3.31,-0.3) .. (0,0) .. controls (3.31,0.3) and (6.95,1.4) .. (10.93,3.29)   ;
%Shape: Circle [id:dp11389286424139122] 
\draw  [fill={rgb, 255:red, 255; green, 255; blue, 255 }  ,fill opacity=1 ] (340.57,251.9) .. controls (344.86,249.77) and (350.06,251.52) .. (352.18,255.81) .. controls (354.31,260.1) and (352.56,265.3) .. (348.27,267.43) .. controls (343.98,269.55) and (338.78,267.8) .. (336.66,263.51) .. controls (334.53,259.22) and (336.28,254.02) .. (340.57,251.9) -- cycle ;
%Straight Lines [id:da8880738202824328] 
\draw    (200.73,253.06) -- (229.52,238.17) ;
\draw [shift={(231.3,237.26)}, rotate = 152.67] [color={rgb, 255:red, 0; green, 0; blue, 0 }  ][line width=0.75]    (10.93,-3.29) .. controls (6.95,-1.4) and (3.31,-0.3) .. (0,0) .. controls (3.31,0.3) and (6.95,1.4) .. (10.93,3.29)   ;
%Shape: Circle [id:dp772660594117091] 
\draw  [fill={rgb, 255:red, 255; green, 255; blue, 255 }  ,fill opacity=1 ] (201.71,259.5) .. controls (200.88,264.22) and (196.4,267.38) .. (191.68,266.55) .. controls (186.96,265.73) and (183.81,261.24) .. (184.63,256.53) .. controls (185.45,251.81) and (189.94,248.66) .. (194.66,249.48) .. controls (199.37,250.3) and (202.53,254.79) .. (201.71,259.5) -- cycle ;
%Straight Lines [id:da8020102640013594] 
\draw    (249.4,236.82) -- (281.79,235.61) ;
\draw [shift={(283.78,235.53)}, rotate = 177.86] [color={rgb, 255:red, 0; green, 0; blue, 0 }  ][line width=0.75]    (10.93,-3.29) .. controls (6.95,-1.4) and (3.31,-0.3) .. (0,0) .. controls (3.31,0.3) and (6.95,1.4) .. (10.93,3.29)   ;
%Straight Lines [id:da015575841342917007] 
\draw    (245.25,225.29) -- (259.94,199.6) ;
\draw [shift={(260.93,197.86)}, rotate = 119.77] [color={rgb, 255:red, 0; green, 0; blue, 0 }  ][line width=0.75]    (10.93,-3.29) .. controls (6.95,-1.4) and (3.31,-0.3) .. (0,0) .. controls (3.31,0.3) and (6.95,1.4) .. (10.93,3.29)   ;

% Text Node
\draw (267.28,151.19) node [anchor=north west][inner sep=0.75pt]  [font=\footnotesize,rotate=-358.62] [align=left] {$\displaystyle a$};
% Text Node
\draw (257.94,238.76) node [anchor=north west][inner sep=0.75pt]  [font=\footnotesize,rotate=-357.36] [align=left] {$\displaystyle a$};
% Text Node
\draw (279.94,198.76) node [anchor=north west][inner sep=0.75pt]  [font=\footnotesize,rotate=-357.36] [align=left] {$\displaystyle a$};
% Text Node
\draw (238.94,204.76) node [anchor=north west][inner sep=0.75pt]  [font=\footnotesize,rotate=-357.36] [align=left] {$\displaystyle a$};
% Text Node
\draw (203.94,235.76) node [anchor=north west][inner sep=0.75pt]  [font=\footnotesize,rotate=-357.36] [align=left] {$\displaystyle a$};
% Text Node
\draw (322.94,235.76) node [anchor=north west][inner sep=0.75pt]  [font=\footnotesize,rotate=-357.36] [align=left] {$\displaystyle a$};
% Text Node
\draw (261.92,184.23) node [anchor=north west][inner sep=0.75pt]  [font=\tiny,rotate=-357.36] [align=left] {$\displaystyle c_{1}$};
% Text Node
\draw (237.94,227.76) node [anchor=north west][inner sep=0.75pt]  [font=\tiny,rotate=-357.36] [align=left] {$\displaystyle c_{2}$};
% Text Node
\draw (290.6,228.63) node [anchor=north west][inner sep=0.75pt]  [font=\tiny,rotate=-357.36] [align=left] {$\displaystyle c_{3}$};
% Text Node
\draw (258.91,132.96) node [anchor=north west][inner sep=0.75pt]  [font=\tiny,rotate=-359.21] [align=left] {$\displaystyle t_{1}$};
% Text Node
\draw (188.6,253.63) node [anchor=north west][inner sep=0.75pt]  [font=\tiny,rotate=-357.36] [align=left] {$\displaystyle t_{2}$};
% Text Node
\draw (342.71,254.8) node [anchor=north west][inner sep=0.75pt]  [font=\tiny,rotate=-357.36] [align=left] {$\displaystyle t_{3}$};

\end{tikzpicture}
    \caption{The $a$-cluster of the automaton $\mathcal{B}_3$}
    \label{fig:cluster}
\end{figure}
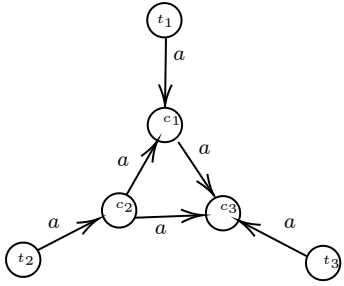
Let $\mathcal{T}_k = \{T \cup (C_k \cap ((T_k \setminus T).a).a^{-1}): T \in 2^{T_k} \setminus \{\varnothing\} \}$ Intuitively it is the set of all sets $T$ such that $|T'| = k$ and $T'.a = C_k$. First we prove a simple lemma.
\begin{lemma}
\label{lemma:T_k_size}
Let $\mathcal{T}_k$ be a family of sets defined above. Then $|\mathcal{T}_k| = 2^{|T_k|} - 1$.
\end{lemma}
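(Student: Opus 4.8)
The plan is to exhibit an explicit bijection between $\mathcal{T}_k$ and the collection $2^{T_k}\setminus\{\varnothing\}$ of nonempty subsets of $T_k$, which has exactly $2^{|T_k|}-1$ elements. Consider the map $\varphi$ sending a nonempty $T\subseteq T_k$ to $\varphi(T) = T \cup \bigl(C_k \cap ((T_k\setminus T).a).a^{-1}\bigr)$; by the very definition of $\mathcal{T}_k$ we have $\mathcal{T}_k = \{\varphi(T) : T\in 2^{T_k}\setminus\{\varnothing\}\}$, so $\varphi$ is surjective onto $\mathcal{T}_k$, and it only remains to prove that $\varphi$ is injective.

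For injectivity I would use that the ``cycle part'' $C_k \cap ((T_k\setminus T).a).a^{-1}$ is, by construction, a subset of $C_k$, whereas $T\subseteq T_k$. Since in $\mathcal{B}_k$ the sets $C_k$ and $T_k$ are disjoint (the states $c_1,\dots,c_k,t_1,\dots,t_k$ are pairwise distinct), intersecting $\varphi(T)$ with $T_k$ recovers exactly $T$. Thus $S\mapsto S\cap T_k$ is a left inverse of $\varphi$, hence $\varphi$ is injective, and $|\mathcal{T}_k| = |2^{T_k}\setminus\{\varnothing\}| = 2^{|T_k|}-1$.

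If one wishes to make the description of $\mathcal{T}_k$ concrete and to confirm the stated intuition, I would additionally unwind the transition function: from $\delta_k(t_i,a)=c_i$ and $\delta_k(c_i,a)=c_{i+1 \bmod k}$ one obtains $(T_k\setminus T).a = \{c_i : t_i\notin T\}$ and then $((T_k\setminus T).a).a^{-1} = \{c_{i-1 \bmod k} : t_i\notin T\}\cup\{t_i : t_i\notin T\}$, so $\varphi(T) = T \cup \{c_{i-1 \bmod k} : t_i\notin T\}$, which indeed satisfies $\varphi(T).a = C_k$ and $|\varphi(T)| = |T| + (k-|T|) = k$. This unwinding is purely cosmetic for the counting argument, which rests only on the disjointness of $C_k$ and $T_k$.

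There is essentially no obstacle here; the only point requiring care is parsing the definition of $\mathcal{T}_k$ correctly (in particular that the preimage under $a$ is taken inside all of $Q_k$ and then intersected with $C_k$), together with the observation that $C_k\cap T_k=\varnothing$, which is precisely what makes the ``intersect with $T_k$'' left inverse work.
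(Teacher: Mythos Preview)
Your proof is correct and follows essentially the same approach as the paper: both arguments establish injectivity of the map $T\mapsto T\cup(C_k\cap((T_k\setminus T).a).a^{-1})$ by observing that the second part of the union lies in $C_k$, which is disjoint from $T_k$, so $T$ can be recovered from the image. Your presentation is slightly more explicit (naming the left inverse $S\mapsto S\cap T_k$), but the underlying idea is identical.
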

\begin{proof}
Let $T'_1, T'_2 \in 2^{T_k}$ such that $T'_1 \neq T'_2$. Denote $C_k \cap ((T_k \setminus T'_1).a).a^{-1} = P_1$ and $C_k \cap ((T_k \setminus T'_2).a).a^{-1} = P_2$ It suffices to show that for any  $T'_1, T'_2$ sets $T'_1 \cup P_1$ and $T'_2 \cup P_2$ are different. But it is easy to notice that $P_1,P_2 \subset C_k$ and since $T'_1 \neq T'_2$ and $T'_1, T'_2 \subset T_k$ then the lemma holds.
\end{proof}
Using Lemma \ref{lemma:T_k_size} we can enumerate sets of family $\mathcal{T}_k$ arbitrarily from $S_1$ to $S_{2^{|T_k|} - 1}$. Let us prove properties of that sets in the next lemma.
\begin{lemma}
\label{lemma:S_i_sets}
For any $i \in \{1, \ldots, 2^{|T_k|} - 1 \}$ holds $S_i.a = C_k$ and $|S_i| = \frac{n}{2}$.
\end{lemma}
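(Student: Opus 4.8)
The plan is to unwind the definition of the members of $\mathcal{T}_k$ into an explicit form and then check both assertions by a direct count and a direct computation of the image under $a$. First I would fix $i$ and write $S_i = T \cup \bigl(C_k \cap ((T_k \setminus T).a).a^{-1}\bigr)$ for the associated nonempty $T \in 2^{T_k}$. Since $\delta_k(t_j,a) = c_j$, we have $(T_k \setminus T).a = \{c_j : t_j \notin T\}$. To take the $a$-preimage, note that the only transitions landing in $c_j$ under $a$ come from $c_{j-1}$ and from $t_j$ (with indices modulo $k$, so the predecessor of $c_1$ is $c_k$); intersecting the preimage with $C_k$ discards the $t_j$'s and leaves $C_k \cap ((T_k \setminus T).a).a^{-1} = \{c_{j-1} : t_j \notin T\}$. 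Hence
\[
S_i = T \;\cup\; \{\, c_{j-1} : t_j \notin T \,\}.
\]

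For the cardinality claim: $T \subseteq T_k$ while the second part lies in $C_k$, and $C_k \cap T_k = \varnothing$, so the union is disjoint. If $|T| = r$ then there are $k-r$ indices $j$ with $t_j \notin T$, and since $j \mapsto j-1$ is a bijection of the index set $\{1,\dots,k\}$ the second part has exactly $k-r$ elements; therefore $|S_i| = r + (k-r) = k = \frac{n}{2}$.

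For the image claim: applying $\delta_k$ to the explicit form gives $T.a = \{c_j : t_j \in T\}$ and $\{\, c_{j-1} : t_j \notin T \,\}.a = \{c_j : t_j \notin T\}$, so
\[
S_i.a = \{c_j : t_j \in T\} \cup \{c_j : t_j \notin T\} = \{c_1,\dots,c_k\} = C_k .
\]

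I do not expect a genuine obstacle here: the argument is essentially bookkeeping once the preimage is computed. The only point that needs care is the cyclic wraparound of the indices when forming the $a$-preimage of $c_1$, together with the remark that passing to predecessors is a bijection on $\{1,\dots,k\}$ — this is precisely what makes both the count ($|S_i| = k$) and the equality $S_i.a = C_k$ come out independently of the chosen $T$.
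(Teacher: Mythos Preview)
Your proof is correct and follows essentially the same approach as the paper: both unfold the definition of $S_i$ and verify the two claims by direct computation, with your version spelling out the elements explicitly via indices while the paper works at the level of set operations (showing $(C_k \cap ((T_k \setminus T).a).a^{-1}).a = C_k \setminus T.a$ and counting via the two-element fibers $q.a^{-1}$). The only cosmetic difference is that you make the cyclic wraparound $c_0 = c_k$ explicit, which the paper leaves implicit in the set-theoretic calculation.
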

\begin{proof}
By definition of $S_i$ we know that there exist $T \subseteq T_k$ such that $S_i = T \cup (C_k \cap ((T_k \setminus T).a).a^{-1})$. By construction $T.a \subseteq C$ so it suffices to show that $(C \cap ((T_k \setminus T).a).a^{-1}).a  = C \setminus T.a$. Indeed $(C_k \cap ((T_k \setminus T).a).a^{-1}).a  = C_k.a \cap (T_k \setminus T).a = C_k \cap T_k.a \setminus T.a = C \cap C \setminus T.a = C \setminus T.a$. To show that $|S_i| = \frac{n}{2}$ first observe that  $(T_k \setminus T).a = \{ q \in C_k \setminus T.a\}$. Let $q \in C_k \setminus T.a$ and observe that set $q.a^{-1}$ contains exactly two states $q_1 \in T_k$ and $q_2 \in C_k$ and by the construction we have that $|((T_k \setminus T).a).a^{-1} | = 2|C_k \setminus T.a|$ and half of the states must be contained in $C_k$ which gives us $|S_i| =|T.a| + |C_k \setminus T.a|$ and that concludes the proof.
\end{proof}
Let $\Sigma_k' = \{b_1, \ldots, b_{2^{|T_k|} -1}, c\}$ (this is the subalphabet used in the definition of $\Sigma_k$) and $S_i = \{s_1^i, \ldots, s_k^i\}$. Put transition function $\delta_k$ (for $\Sigma_k'$) as follows:
\begin{itemize}
    \item $\delta_k(c_j, b_1) = s_j^1$, letter $b_1$ not defined for other states
    \item for $0 < i < 2^{|T_k|} -2$ let $\delta_k(s_j^i, b_{i+1}) = s_j^{i+1}$, letter $b_i+1$ not defined for other states
    \item$\delta_k(s_j^{2^{|T_k|}-1}, c) = c_1$, letter $c$ not defined for other states
\end{itemize}
Observe that for any $i$ holds $\delta_k(S_i,b_{i+1}) = S_{i+1}$. Let us state and prove the main theorem in this section.
\begin{theorem}
\label{theorem:lower_bound}
Automaton $\mathcal{B}_k$ is carefully synchronized by a word $w = ab_1\ldots b_{2^{|T_k|}-1}c$, $w$ is the shortest word that carefully synchronizes $\mathcal{B}_k$ and $|w| = 2^{\frac{n}{2}} + 1$.
\end{theorem}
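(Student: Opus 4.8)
The plan is to verify the three assertions of the theorem in turn: that $w$ carefully synchronizes $\mathcal{B}_k$, that $|w| = 2^{n/2}+1$, and that no shorter word works.

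First I would check that $w = ab_1 b_2 \cdots b_{2^{|T_k|}-1} c$ is carefully synchronizing. Starting from $Q_k$, apply $a$: since $\mathcal{B}_k$ is one-cluster with respect to $a$ and has level $1$ (every $t_i$ maps to $c_i$ and the $c_i$'s form the cycle), we get $Q_k.a = C_k$. By the definition of $S_1$ (or directly from Lemma~\ref{lemma:S_i_sets}, which gives $S_i.a = C_k$) one checks that $C_k.b_1 = S_1$, because $\delta_k(c_j,b_1)=s_j^1$ and $S_1 = \{s_1^1,\dots,s_k^1\}$; here I need that $b_1$ is defined on all of $C_k$, which it is by construction. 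Then, using the observation recorded just before the theorem that $\delta_k(S_i,b_{i+1})=S_{i+1}$ (and that $b_{i+1}$ is defined exactly on $S_i$), an induction on $i$ shows $C_k.b_1 b_2\cdots b_i = S_i$ for each $i$, so after reading $b_1\cdots b_{2^{|T_k|}-1}$ we are at $S_{2^{|T_k|}-1}$. Finally $c$ is defined precisely on $S_{2^{|T_k|}-1}$ and sends every state there to $c_1$, so $Q_k.w = \{c_1\}$ and all transitions along the way are defined. The length count is immediate: $|w| = 1 + (2^{|T_k|}-1) + 1 = 2^{|T_k|}+1 = 2^{n/2}+1$ since $|T_k| = k = n/2$.

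The substantive part is minimality, and this is where I expect the main obstacle. The key structural facts I would exploit are: (i) $a$ is the only letter defined on all of $Q_k$, so any carefully synchronizing word must begin with $a$, and after any prefix the current set of states must always be one on which the next letter is defined; (ii) the letters $b_i$ and $c$ each have a very restricted domain — $b_1$ only on $C_k$, $b_{i+1}$ only on $S_i$, and $c$ only on $S_{2^{|T_k|}-1}$ — and each acts injectively on its domain, so it never decreases cardinality, while $a$ also maps $C_k$ bijectively to itself. I would argue that from $Q_k$ the only way to reach a set that is the domain of some non-$a$ letter is to first collapse down to $C_k$ (via $a$, or more $a$'s, which is the only collapsing step available at the start since no other letter is defined on $Q_k$ or on supersets of it meeting $T_k$), and that from $C_k$ the only productive move is $b_1$, landing in $S_1$; thereafter, since $|S_i| = k$ for all $i$ and the only letters defined on $S_i$ are $a$ (which fixes $S_i$ setwise only if $S_i.a = S_i$, but $S_i.a = C_k$, so $a$ changes the set) and $b_{i+1}$ (which moves $S_i$ to $S_{i+1}$), I would need to show one cannot "shortcut" by using $a$'s to jump between the $S_i$'s or to reach a $c$-domain sooner. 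The heart of the matter is ruling out such shortcuts: one must show that once the reachable set has size $k$ and lies among configurations from which synchronization is still possible, the $S_i$ are reached in a forced order, each requiring its own letter $b_{i+1}$, so that all $2^{|T_k|}-1$ of the $b$-letters must appear (and $c$ once at the end), forcing length at least $2^{n/2}+1$.

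To organize this rigorously I would phrase it in the power automaton (Fact~\ref{fact:1}): describe the subgraph of $\mathcal{P}(\mathcal{B}_k)$ reachable from $Q_k$ along paths that can still reach a singleton, show it is essentially the path $Q_k \xrightarrow{a} C_k \xrightarrow{b_1} S_1 \xrightarrow{b_2} \cdots \xrightarrow{b_{2^{|T_k|}-1}} S_{2^{|T_k|}-1} \xrightarrow{c} \{c_1\}$ with the only extra edges being $a$-loops/detours on $C_k$ and on each $S_i$ that strictly lengthen any path, and conclude that the shortest $Q_k$-to-singleton path has exactly $2^{|T_k|}+1$ edges. The care needed is in the claim that an $a$-step from $S_i$ never helps: $S_i.a = C_k$, and from $C_k$ the only letter that decreases nothing but moves toward the goal is $b_1$, which returns us to $S_1$, so any such excursion only repeats earlier work; making this "no useful detour" argument fully precise, handling the interplay of $a$ with the $S_i$'s, is the step I would budget the most effort for.
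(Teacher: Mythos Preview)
Your proposal is correct and follows essentially the same approach as the paper: verify that $w$ carefully synchronizes by tracing $Q_k \xrightarrow{a} C_k \xrightarrow{b_1} S_1 \xrightarrow{b_2} \cdots \xrightarrow{c} \{c_1\}$, and establish minimality by analyzing the reachable part of the power automaton and showing it is this path together with $a$-edges that all lead back to $C_k$ (so that any detour only lengthens the walk). The paper's proof is terser on the minimality side---it simply asserts that from each $S_i$ the only non-$a$ letter defined is $b_{i+1}$ (which follows from $|S_i|=k$ and the pairwise distinctness of the $S_i$) and that $S_i.a = C_k$, so the power automaton forces the path---whereas you spell out the ``no useful detour'' argument more carefully; but the underlying idea is identical.
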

\begin{proof}
Since $\mathcal{B}_k$ is one clustered with respect to $a$ and the level of that cluster is $1$, it is straightforward that $Q.a = C$. Immediately from the construction we have that for $w_i = ab_1\ldots b_i$ the equality $Q.w_i = S_i$ holds and from that we have that $Q.w = c_1$ so $w$ carefully synchronizes $\mathcal{B}_k$. In order to prove the minimality of $w$ first notice that for each $i$ holds that $S_i.b_j$ is undefined for $j \neq i-1$ (because each $|S_i| = \frac{n}{2}$ from Lemma \ref{lemma:S_i_sets}) and $S_i.a = C$ (by Lemma \ref{lemma:S_i_sets}). Then $\mathcal{P}(\mathcal{B}_k)$ forms a path from $Q$ to $\{c_1\}$ labelled with consecutive letters of $w$ and for each state on the path there is only one transition leading to $C$ (visited in the first step of the path) and that ends the proof.
\end{proof}

\section{Complexity}
\label{section:complexity}

This section is devoted to the proof that the problem of deciding whether a given one-cluster PFA is carefully synchronizing is NP-hard even in the case of binary alphabet. Notice that Observation \ref{observation:cluster_defined} implies that, letter $b$ must be defined for at least all states in the cycle induced by the letter $a$. Refer to that problem as ONE-CLUSTER-CARSYNC. Let us state that theorem formally.
\begin{theorem}
\label{theorem:np_hardness}
For a given PFA $\mathcal{A} = (Q, \{a,b\}, \delta)$ that is one-cluster with respect to $a$, the problem of deciding whether $\mathcal{A}$ has a carefully synchronizing word (2-ONE-CLUSTER-CARSYNC) is NP-hard.
\end{theorem}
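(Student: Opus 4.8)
The plan is to reduce from 3-SAT. Given a formula $\varphi$ with variables $x_1,\dots,x_n$ and clauses $C_1,\dots,C_m$, I would construct in polynomial time a PFA $\mathcal{A}_\varphi=(Q,\{a,b\},\delta)$ that is one-cluster with respect to $a$ and is carefully synchronizing if and only if $\varphi$ is satisfiable. The automaton is organized around a single $a$-cluster whose cycle (of length linear in $n$, decorated with short attached trees) acts as a \emph{clock}: reading $a$ advances the clock by one tick, so any carefully synchronizing word is split into $n$ consecutive \emph{phases}, the $i$-th phase being ``in charge of'' variable $x_i$. During phase $i$ one is allowed to insert at most one occurrence of $b$; doing so is read as committing to $x_i=1$, not doing so as committing to $x_i=0$. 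The transition function of $b$ is defined exactly on those states that can legally be alive at a decision point — in particular on the whole $a$-cycle, as is forced by Observation~\ref{observation:cluster_defined} — and it is rigged so that firing $b$ at any other moment, or twice in one phase, pushes some live state into a trap or onto an undefined transition, destroying any chance of careful synchronization.

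Next I would add, for each clause $C_j$, a small \emph{monitor} gadget of constantly many states whose surviving state, carried along as the word is read, records whether $C_j$ has been satisfied by the commitments made so far: the monitor for $C_j$ is ``discharged'' — merged into the main flow — precisely when, in the phase of one of the three variables occurring in $C_j$, the commitment agrees with the sign of that occurrence. A final \emph{collapse} block appended to the word (built from $a$'s and one last $b$) then merges every still-alive state to a single sink state $\bar q$; the point is that this block can be completed with all transitions defined only if every clause monitor has already been discharged, since an undischarged monitor state has no defined transition in the collapse configuration. Hence, if $\nu$ satisfies $\varphi$, the canonical word $w_\nu$ that inserts $b$ in phase $i$ iff $\nu(x_i)=1$, followed by the collapse block, carefully synchronizes $\mathcal{A}_\varphi$; this direction is a direct computation, tracking $Q.w$ through the gadget definitions and checking that no transition is ever undefined.

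The converse — $\mathcal{A}_\varphi$ carefully synchronizing implies $\varphi$ satisfiable — is where the real work lies and is, I expect, the main obstacle. One must show that \emph{every} carefully synchronizing word $w$, after harmless normalization, has the canonical shape ``phase $1,\dots,$ phase $n$, collapse'' and therefore induces a truth assignment, which by the monitor gadgets satisfies all clauses. The argument rests on the one-cluster structure: since $a$ is the only letter defined everywhere, $w$ must start with $a$ (cf.\ Fact~\ref{fact:1}), the positions of the $a$'s pin down the phase boundaries via the clock, and the restricted domain of $b$ (Observation~\ref{observation:cluster_defined} gives a lower bound, while the construction makes this domain exactly the set of ``safe'' configurations) forces every occurrence of $b$ into a legal decision slot, with the trap states forbidding reuse of a variable's slot or bypassing the collapse; Lemma~\ref{lemma:shrink_by_half} and Fact~\ref{fact:1} can be invoked to control the intermediate image sets and rule out ``creative'' synchronizations. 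The delicate design points are: (i) arranging $b$ — a single fixed partial function — so that its \emph{effect} on a clause monitor depends only on the monitor state while its \emph{meaning} (which variable, which sign) is supplied by alignment with the clock, which dictates a layered state set of size $O(n+m)$; (ii) keeping everything inside one $a$-cluster, so that clock, monitor and trap states must all be packed onto a single cycle with short trees, while still yielding a weakly connected $\mathcal{G}_a$; and (iii) doing all of this over the binary alphabet $\{a,b\}$, which removes the convenience of a separate letter per gadget. Finally one checks that $|Q|$ and the table of $\delta$ are polynomial in $n+m$ and computable in polynomial time, which completes the proof of NP-hardness of 2-ONE-CLUSTER-CARSYNC.
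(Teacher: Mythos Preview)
Your high-level plan---reduce from 3-SAT by building a binary one-cluster PFA $\mathcal{A}_\varphi$ that carefully synchronizes iff $\varphi$ is satisfiable---is exactly the paper's, but the concrete encoding you sketch is more elaborate than what the paper actually does, and several of the complications you anticipate simply do not arise there. In the paper the $a$-cycle has length $m$ (one state $p_i$ per \emph{clause}, not ``linear in $n$''), and the assignment is encoded directly rather than via a phase-with-optional-$b$ scheme: after $a^{n+3}$ collapses $Q$ onto the cycle $P$, a single $b$ sends each $p_i$ to the head $\bar c_i^{x_1}$ of a pair of length-$(n{+}1)$ ladders for clause $C_i$; then an arbitrary word $w\in\{a,b\}^n$ is read, the $j$-th \emph{letter itself} being the value of $x_j$ ($a$ for true, $b$ for false). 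At step $j$ the state on the $i$-th ladder crosses from the ``unsatisfied'' track to the ``satisfied'' track precisely when the letter matches the sign of $x_j$ in $C_i$, and once on the satisfied track both letters keep it there. After $n$ steps each ladder ends in $c_i^{end}$ or $\bar c_i^{end}$, and a final $b$ sends every $c_i^{end}$ to $p_1$ while $\bar c_i^{end}.b$ is undefined. The synchronizing word is thus $a^{n+3}bwb$---no clock, no ``at most one $b$ per slot'', no traps.

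The converse, which you flag as the crux, is in the paper a short forcing argument (Lemmas~\ref{lemma:a_phi_d_undefinded} and~\ref{lemma:equiv_word}): any carefully synchronizing word must begin with $a^{\ge n+3}$ (shorter $a$-powers leave some $r_i$ alive and $r_i.b$ is undefined), hence reaches $P$; the only way off $P$ is $b$, landing in $S_{init}$; from there every length-$n$ word lands in $\{c_i^{end},\bar c_i^{end}\}_i$, and unless this set is $S_{end}$ the only defined continuations cycle back through $R$ and $P$ with no progress. So normalization is enforced by three strategically undefined $b$-transitions, and Lemma~\ref{lemma:shrink_by_half} plays no role whatsoever---invoking it here is a red herring.

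As written, your proposal is an architectural plan, not a proof: the ``delicate design points'' (i)--(iii) you list are exactly what must be carried out, and you have not carried them out. If you pursue your variant you will need to give $\delta$ explicitly and prove both directions; but the paper shows the job can be done with $|Q|=2m(n+2)$ states, an eleven-line transition table, and none of the clock/phase machinery.
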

In order to prove the theorem we construct a polynomial time reduction from 3-SAT to 2-ONE-CLUSTER-CARSYNC. Let $\{x_1, \ldots, x_n\}$ be a set of variables and $C_1, \ldots, C_m$ be clauses. We also assume that for any pair $\{x_j, \bar{x}_j\}$ if $x_j \in C_i$ then $\bar{x}_j \notin C_i$. That is not really a problem, since if both of them belong to $C_i$, then $C_i$ is always true. For a given formula $\phi = C_1 \land \ldots \land C_m$ we define a PFA $\mathcal{A}_{\phi} = (Q, \{a,b\}, \delta_{\phi})$. Let $Q = \bigcup_{i=1}^m (S_{C_i^t} \cup S_{C_i^f}) \cup P \cup R$ where:
\begin{itemize}
    \item $P = \{p_1, \ldots, p_m\}$
    \item $R = \{r_1, \ldots, r_m\}$
    \item $S_{C_i^t} = \{c_i^{x_1}, \ldots, c_i^{x_n}, x_i^{end}\}$ for each $i$
    \item $S_{C_i^f} = \{\bar{c}_i^{x_1}, \ldots, \bar{c}_i^{x_n}, \bar{x}_i^{end}\}$ for each $i$.
\end{itemize}
Define $\delta_{\phi}$ in the following way:
\begin{enumerate}
    \item \label{point:1} $\delta_{\phi}(p_i, a) = p_{i+1\text{ mod }m}$
    \item \label{point:2} $\delta_{\phi}(p_i, b) = \bar{c}_i^{x_1}$
    \item \label{point:3} $\delta_{\phi}(r_i, a) = p_{i}$, $\delta_{\phi}(r_i, b)$ undefined
    \item \label{point:4} $\delta_{\phi}(c_i^{end}, a) = \delta_{\phi}(\bar{c}_i^{end}, a) = r_i$
    \item \label{point:5} $\delta_{\phi}(c_i^{end}, b) = p_1$ $\delta_{\phi}(\bar{c}_i^{end},b)$ undefined
    \item \label{point:6} if $x_j \in C_i$ then $\delta_{\phi}(\bar{c}_i^{x_j}, a) = c_i^{x_{j+1}}$ otherwise $\delta_{\phi}(\bar{c}_i^{x_j}, a) = \bar{c}_i^{x_{j+1}}$ for $0 < j < n$
    \item \label{point:7} if $\bar{x}_j \in C_i$ then $\delta_{\phi}(\bar{c}_i^{x_j}, b) = c_i^{x_{j+1}}$ otherwise $\delta_{\phi}(\bar{c}_i^{x_j}, b) = \bar{c}_i^{x_{j+1}}$ for $0 < j < n$
    \item \label{point:8} $\delta_{\phi}(c_i^{x_j}, a)= \delta_{\phi}(c_i^{x_j}, b) = c_i^{x_{j+1}}$ for $0 < j < n$ 
    \item \label{point:9} if $x_n \in C_i$ then $\delta_{\phi}(\bar{c}_i^{x_n}, a) = c_i^{end}$ otherwise $\delta_{\phi}(\bar{c}_i^{x_n}, a) = \bar{c}_i^{end}$
    \item \label{point:10} if $\bar{x}_n \in C_i$ then $\delta_{\phi}(\bar{c}_i^{x_n}, b) = c_i^{end}$ otherwise $\delta_{\phi}(\bar{c}_i^{x_n}, b) = \bar{c}_i^{end}$
    \item \label{point:11} $\delta_{\phi}(c_i^{x_n}, a)= \delta_{\phi}(c_i^{x_n}, b) = c_i^{end}$ 
\end{enumerate}
For the example formula $\phi_{ex} = (x_1 \lor x_3 \lor x_4) \land (x_1 \lor x_2 \lor \bar{x}_3) \land (\bar{x}_1 \lor \bar{x}_2 \lor x_4)$ the corresponding $\mathcal{A}_{\phi_{ex}}$ is depicted in Fig. \ref{fig:example_phi_ex}.
\begin{figure}[H]
    \centering
    \input{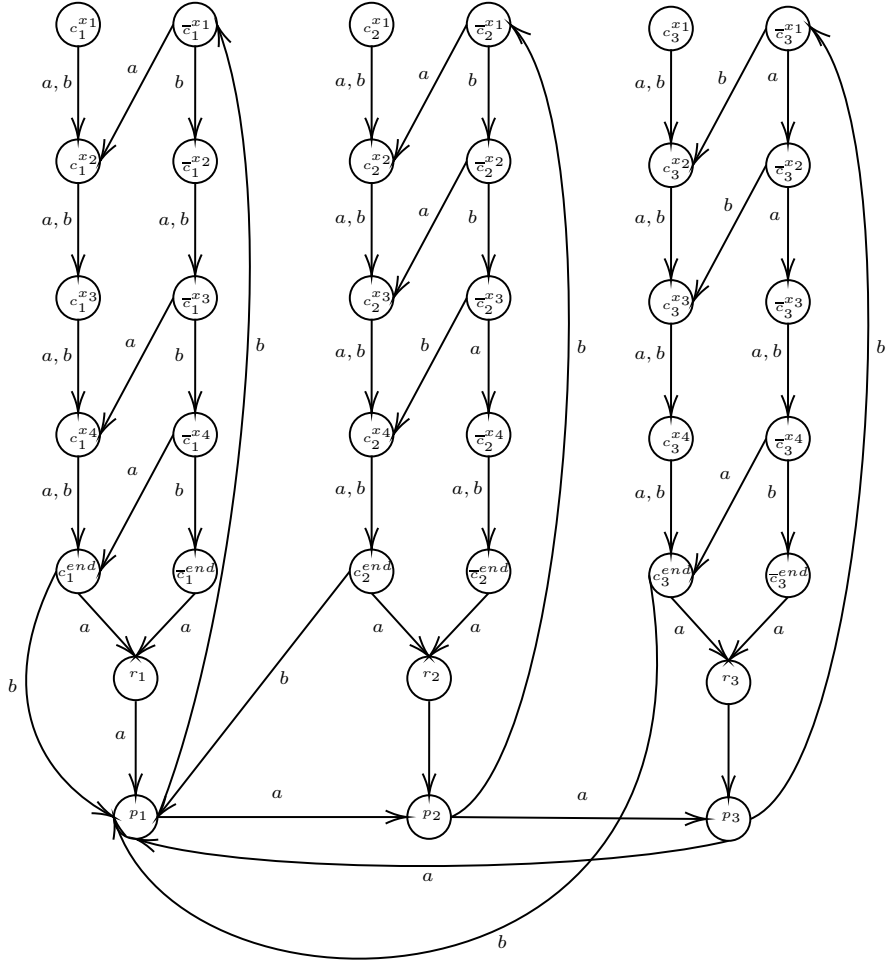}
    \caption{Automaton $\mathcal{A}_{\phi_{ex}}$}
    \label{fig:example_phi_ex}
\end{figure}
Let us state and prove two lemmas before going further.
\begin{lemma}
\label{lemma:a_phi_cluster}
For any given $\phi$ the automaton $\mathcal{A}_{\phi}$ is one-cluster with respect to $a$.
\end{lemma}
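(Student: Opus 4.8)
The plan is to show that the digraph $\mathcal{G}_a$ has exactly one weakly connected component, equivalently exactly one $a$-cycle, by first checking $a$ is defined everywhere and then tracing the $a$-edges from an arbitrary state to the cycle through $P$.

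\medskip

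\noindent\textbf{Approach.} First I would verify that $\delta_\phi(\cdot, a)$ is total: inspecting the definition, $a$ is defined on $p_i$ (point \ref{point:1}), on $r_i$ (point \ref{point:3}), on $c_i^{end}$ and $\bar c_i^{end}$ (point \ref{point:4}), on $\bar c_i^{x_j}$ and $c_i^{x_j}$ for $0<j<n$ (points \ref{point:6}, \ref{point:8}), and on $\bar c_i^{x_n}$, $c_i^{x_n}$ (points \ref{point:9}, \ref{point:11}); so $a$ is defined on all of $Q$ and $\mathcal{G}_a$ is well-defined. Next I would identify the unique $a$-cycle: by point \ref{point:1} the set $P = \{p_1, \dots, p_m\}$ carries the cycle $p_1 \to p_2 \to \cdots \to p_m \to p_1$ under $a$. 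It remains to show every state eventually reaches $P$ under iterated $a$, which forces the whole automaton to be one weakly connected component with this as its only cycle.

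\medskip

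\noindent\textbf{Key steps, in order.} (1) Show every $r_i$ maps into $P$ in one step (point \ref{point:3}). (2) Show that starting from any $c_i^{x_j}$ or $\bar c_i^{x_j}$, applying $a$ repeatedly walks up the index $j$: by points \ref{point:6} and \ref{point:8} each application of $a$ sends $c_i^{x_j}$ or $\bar c_i^{x_j}$ to $c_i^{x_{j+1}}$ or $\bar c_i^{x_{j+1}}$, and after at most $n$ steps reaches $c_i^{end}$ or $\bar c_i^{end}$ (points \ref{point:9}, \ref{point:11}); one more $a$ sends these to $r_i$ (point \ref{point:4}), and then into $P$. Hence from any state in $\bigcup_i (S_{C_i^t} \cup S_{C_i^f})$ the word $a^{n+1}$ lands in $P$. (3) Conclude: every state of $Q$ has a directed $a$-path to $p_1$, so all states lie in the weakly connected component of the cycle $P$; since each state has out-degree exactly one in $\mathcal{G}_a$, each weakly connected component contains exactly one cycle, and there is no room for a second component. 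Therefore $\mathcal{A}_\phi$ is one-cluster with respect to $a$.

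\medskip

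\noindent\textbf{Main obstacle.} There is no real obstacle here; this is essentially a bookkeeping verification. The only point needing a little care is making sure no ``stray'' cycle can form among the $c$/$\bar c$/$r$ states — but this is immediate because along any such state the $a$-action strictly increases the superscript index $j$ (or passes $x_j \to end \to r_i \to P$), so no $a$-orbit can return to its start until it has entered $P$; once in $P$ it stays in $P$. I would phrase the write-up around the monotone ``level'' of states towards $P$ to make this transparent, invoking the general fact about $\mathcal{G}_a$ stated before Figure \ref{fig:cluster} that a single out-edge per vertex plus weak connectivity gives a unique cycle.
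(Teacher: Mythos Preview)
Your proof is correct and follows the same overall strategy as the paper---identify $P$ as the unique $a$-cycle and show every other state reaches it under iterated $a$---but your execution is cleaner. The paper carries out a case analysis on the clause $C_i = (v_{j_1} \lor v_{j_2} \lor v_{j_3})$, tracing explicitly which segments of the states $\bar c_i^{x_j}$ attach to the ``true'' path $c_i^{x_1} \to \cdots \to c_i^{end} \to r_i \to p_i$ versus remain on the ``false'' side, depending on whether each literal is positive or negative; this reconstructs the exact tree shape of the cluster. You bypass this entirely with the single observation that the $a$-action strictly increases the superscript index (or moves $x_n \to end \to r_i \to P$), which already forces acyclicity outside $P$ and hence a single weakly connected component. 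That monotone-level argument is all the lemma actually needs; the paper's detailed attachment analysis is extra information not required for the statement. One minor slip: from $c_i^{x_1}$ you need $a^{n+2}$, not $a^{n+1}$, to land in $P$ ($n$ steps to reach $end$, one to $r_i$, one to $p_i$), but this does not affect the argument.
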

\begin{proof}
First observe that for any $\phi$ the set $P$ forms the $a$-cycle in the automaton $\mathcal{A}_\phi$ (Point \ref{point:1} of the definition of $\delta_{\phi}$). It is also easy to see from the definition of $\mathcal{A}_{\phi}$ that any set $S_{C_i^t}$ together with the states $r_i \in R$ and $p_i \in P$ creates a directed path of the form $c_i^{x_1} \rightarrow c_i^{x_2} \rightarrow \ldots \rightarrow  c_i^{x_n} \rightarrow r_i \rightarrow p_i$ labelled with $a$ (Points \ref{point:3}, \ref{point:4} and \ref{point:8} of the definition of $\delta_{\phi}$). Denote this path as $p$. Denote by $v_j$ a variable of the formula $\phi$ and $v_j \in \{x_j, \bar{x}_j\}$. Consider now clause $C_i \in \phi$ and the set $S_{C_i^f}$. Assume that $C_i = (v_{j_1} \lor v_{j_2} \lor v_{j_3})$ and $j_1 < j_2 < j_3$. Any of the variables $v_{j_3 + 1}, \ldots ,v_n$ does not belong to clause $C_i$, so the set $\{\bar{c}_i^{x_{j_3 + 1}}, \bar{c}_i^{x_{j_3 + 2}}, \ldots, \bar{c}_i^{x_{n}}, \bar{c}_i^{end}, r_i\}$ forms a directed path on the letter $a$ (Points \ref{point:4}, \ref{point:6}, \ref{point:7}, \ref{point:9}, \ref{point:10} of the definition of $\delta_{\phi}$). Denote it by $p_1$. Suppose that $v_{j_3} = \bar{x}_{j_3}$. Then, by Points \ref{point:6} or \ref{point:9}, state $\bar{c}_i^{x_{j_3}}$ is attached to the first state of the path $p_1$ with a transition labelled by a letter $a$. Otherwise, by Points \ref{point:6} or \ref{point:9}, $\bar{c}_i^{x_{j_3}}$ is attached to the path $p$ by a letter $a$. Now we can repeat our argument to the set $\{\bar{c}_i^{x_{j_2 + 1}}, \ldots  \bar{c}_i^{x_{j_3 - 1}}\}$ to show that it forms a path $p_2$ labelled with the letter $a$ which begins in the state $\bar{c}_i^{x_{j_2 + 1}}$ and ends in the state $\bar{c}_i^{x_{j_3 - 1}}$. By Point \ref{point:6} we obtain, that this path is attached by its end to the state $\bar{c}_i^{x_{j_3}}$ by the letter $a$. Now we can also repeat our arguments to the state $\bar{c}_i^{x_{j_2 + 1}}$ to show that it is either attached to $p_2$ or to $p$. Using the same reasoning for the sets $\{\bar{c}_i^{x_{j_1 + 1}}, \ldots  \bar{c}_i^{x_{j_2 - 1}}\}$ and $\{\bar{c}_i^{x_1}, \ldots  \bar{c}_i^{x_{j_2 - 1}}\}$ and state $\bar{c}_i^{x_{j_1}}$ concludes the proof.
\end{proof}

\begin{lemma}
\label{lemma:a_phi_d_undefinded}
For any given $\phi$, if $n_1 < n + 3$ then $\delta_{\phi}(Q,a^{n_1}b)$ is undefined.
\end{lemma}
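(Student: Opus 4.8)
The plan is to locate the states on which $b$ is undefined and then show that, as long as $n_1$ is below the level of $\mathcal{G}_a$, the set $\delta_{\phi}(Q,a^{n_1})$ must still contain one of them, so that applying $b$ to it is impossible (this is the assertion that a careful synchronizing word cannot use $b$ until it has used enough $a$'s). First I would record two easy facts. By Points~\ref{point:3} and~\ref{point:5}, $\delta_{\phi}(q,b)$ is undefined precisely when $q\in R$ or $q=\bar{c}_i^{end}$ for some $i$, and is defined for every other state; in particular all of $R$ is ``$b$-forbidden''. By Points~\ref{point:1},~\ref{point:3},~\ref{point:4},~\ref{point:6},~\ref{point:8},~\ref{point:9},~\ref{point:11}, the letter $a$ is defined on every state, so $\delta_{\phi}(Q,a^{n_1})$ is defined for all $n_1$. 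It therefore suffices to exhibit, for each $n_1$ with $n_1<n+3$, some state of $R$ inside $\delta_{\phi}(Q,a^{n_1})$; such a state makes $\tau(\delta_{\phi}(Q,a^{n_1}),b)=\delta_{\phi}(Q,a^{n_1}b)$ undefined.

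The heart of the argument is a staggered family of preimages of a single state $r_i$. Fix a clause index $i$. By Points~\ref{point:8},~\ref{point:11},~\ref{point:4},~\ref{point:3} the states $c_i^{x_1},\dots,c_i^{x_n},c_i^{end},r_i$ occur in this order on the directed $a$-path
\[c_i^{x_1}\to c_i^{x_2}\to\cdots\to c_i^{x_n}\to c_i^{end}\to r_i\to p_i\]
(the path $p$ appearing in the proof of Lemma~\ref{lemma:a_phi_cluster}). Walking backwards from $r_i$ along this path gives $\delta_{\phi}(r_i,a^0)=r_i$, $\delta_{\phi}(c_i^{end},a)=r_i$, and $\delta_{\phi}(c_i^{x_{n-k+2}},a^k)=r_i$ for every $k$ with $2\le k\le n+1$, so in the extreme case $\delta_{\phi}(c_i^{x_1},a^{n+1})=r_i$. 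Each preimage named here lies in $Q$, so $r_i\in\delta_{\phi}(Q,a^{n_1})$ for all $n_1\in\{0,1,\dots,n+1\}$, which together with the first paragraph disposes of those values. The remaining value with $n_1<n+3$ is $n_1=n+2$; for it I would use the level of $\mathcal{G}_a$, which by the structural analysis in the proof of Lemma~\ref{lemma:a_phi_cluster} equals $n+2$ (the deepest states being the sources $c_i^{x_1}$ and $\bar{c}_i^{x_1}$, at distance $n+2$ from the cycle $P$, and every other state strictly closer), pinning down $\delta_{\phi}(Q,a^{n+1})$ and $\delta_{\phi}(Q,a^{n+2})$ directly.

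I expect the only real obstacle to sit in this last, boundary, value: certifying the level of $\mathcal{G}_a$. That requires following the ``false'' branches $\bar{c}_i^{x_j}$, whose $a$-itinerary jumps between the barred and unbarred chains according to which literals occur in $C_i$ (Points~\ref{point:6} and~\ref{point:9}), and checking that they never reach $P$ faster than the true branch $p$. But this is exactly the case distinction already made in the proof of Lemma~\ref{lemma:a_phi_cluster}, so it can be cited rather than redone; everything else is the one-line backward bookkeeping along $p$ carried out above.
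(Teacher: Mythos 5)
Your reduction of the problem to showing $R\cap\delta_{\phi}(Q,a^{n_1})\neq\varnothing$, and your backward walk along the path $c_i^{x_1}\to\cdots\to c_i^{x_n}\to c_i^{end}\to r_i\to p_i$, are exactly the paper's argument, and they correctly settle every $n_1\in\{0,\dots,n+1\}$. The gap is the boundary case $n_1=n+2$, and it cannot be closed the way you propose: the two facts you have already assembled --- that the level of $\mathcal{G}_a$ is $n+2$ and that $b$ is undefined exactly on $R\cup\{\bar{c}_1^{end},\dots,\bar{c}_m^{end}\}$ --- together give $\delta_{\phi}(Q,a^{n+2})=P$ (every state is within $n+2$ $a$-steps of the cycle and $P.a=P$), while $b$ is defined on all of $P$ by Point~\ref{point:2}, so $\delta_{\phi}(Q,a^{n+2}b)=S_{init}$ is \emph{defined}. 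In other words, the statement is false at $n_1=n+2$: since $c_i^{x_1}$ and $\bar{c}_i^{x_1}$ have no $a$-preimages, the longest $a$-path ending at $r_i$ has length $n+1$, so $r_i\notin\delta_{\phi}(Q,a^{n+2})$, and likewise $\bar{c}_i^{end}\notin\delta_{\phi}(Q,a^{n_1})$ for any $n_1>n$.

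For what it is worth, the paper's own proof contains the same off-by-one: from a path on $n+3$ vertices whose second-to-last vertex is $r_i$ it concludes that $r_i\in Q.a^{n_1}$ ``for each $n_1<n+3$'', which the path only supports for $n_1\le n+1$. The bound should read $n_1<n+2$; that weaker version is precisely what your backward walk proves, and it is all that Lemma~\ref{lemma:equiv_word} actually needs (a carefully synchronizing word must begin with $a^k$ for some $k\ge n+2$, after which $\delta_{\phi}(Q,a^k)=P$ and the argument proceeds as before). So rather than trying to ``pin down'' $\delta_{\phi}(Q,a^{n+2})$ in the hope of finding a forbidden state there, you should report that value as a counterexample and correct the stated range.
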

\begin{proof}
Observe, that the letter $a$ induces a path on $n + 3$ vertices on the set $S_{C_i^t} \cup \{r_i,p_i\} = S_i$ for each $i$. That path starts in $c_i^{x_1}$, ends in $p_i$ and the one before last vertex on that path is $r_i$. Also $Q = \bigcup_{i=1}^m S_i$, so for each $n_1 < n + 3$ holds $r_i \in Q.a^{n_1}$ for any $0 < i < m+1$. On the other hand $\delta_\phi(r_i,b)$ is undefined for each $i$ (Point \ref{point:3}), so $\delta_{\phi}(Q,a^{n_1}b)$ is undefined and the lemma holds. 
\end{proof}
Before moving further let us define, for a given $\mathcal{A}_{\phi}$, two subsets of its states $S_{init} = \{\bar{c}_1^{x_1}, \ldots, \bar{c}_m^{x_1}\}$ and $S_{end} = \{c_1^{end}, \ldots, c_m^{end}\}$ and state the following observation.
\begin{observation}
\label{observation:n_length_word}
For any $i \in \{1, \ldots, m\}$ and formula $\phi$, if $v \in \{a,b\}^*$ and $|v| = n$, then $\delta_{\phi}(\bar{c}_i^{x_1}, v) \in \{c_i^{end},\bar{c}_i^{end}\}$.
\end{observation}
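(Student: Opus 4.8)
The plan is a straightforward induction on the length of the prefix of $v$ that has been read, keeping track of which ``level'' of the clause-$i$ gadget the current image lies in. Write $v = v^{(1)}v^{(2)}\cdots v^{(n)}$ with each $v^{(\ell)} \in \{a,b\}$, and set $v_k = v^{(1)}\cdots v^{(k)}$ for $0 \le k \le n$. The invariant I would prove is: for every $0 \le k \le n-1$,
\[
\delta_{\phi}(\bar{c}_i^{x_1}, v_k) \in \{c_i^{x_{k+1}}, \bar{c}_i^{x_{k+1}}\}.
\]
Granting this for $k = n-1$, reading the last letter $v^{(n)}$ then finishes the argument.

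First I would record that no transition along the way is undefined: by Points \ref{point:6}--\ref{point:11} of the definition of $\delta_\phi$, for every $1 \le j \le n$ both $\delta_\phi(c_i^{x_j},\cdot)$ and $\delta_\phi(\bar{c}_i^{x_j},\cdot)$ are defined on $a$ and on $b$, so all $n$ applications of a letter succeed. The base case $k=0$ is immediate since $\delta_\phi(\bar{c}_i^{x_1},\epsilon) = \bar{c}_i^{x_1}$. For the inductive step with $k+1 \le n-1$, I would case on whether the current state (by hypothesis in $\{c_i^{x_{k+1}}, \bar{c}_i^{x_{k+1}}\}$) is barred or unbarred and on whether $v^{(k+1)}$ is $a$ or $b$: in the unbarred case Point \ref{point:8} sends it to $c_i^{x_{k+2}}$, and in the barred case Points \ref{point:6} (letter $a$) and \ref{point:7} (letter $b$) send it into $\{c_i^{x_{k+2}}, \bar{c}_i^{x_{k+2}}\}$; either way the invariant holds at $k+1$. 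Finally, from a state in $\{c_i^{x_n}, \bar{c}_i^{x_n}\}$ reading $v^{(n)}$ lands in $\{c_i^{end}, \bar{c}_i^{end}\}$ by Point \ref{point:11} (unbarred) and Points \ref{point:9}--\ref{point:10} (barred), which is exactly the claim.

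The only point requiring care — the main, if minor, obstacle — is the boundary at level $n$: the ``interior'' rules (Points \ref{point:6}--\ref{point:8}) apply only for $0 < j < n$ and raise the level index by one, whereas at $j = n$ one must switch to Points \ref{point:9}--\ref{point:11}, which exit into the $\mathit{end}$-states rather than producing a (nonexistent) level-$(n{+}1)$ state. So the induction must be set up to stop at $k = n-1$ and the $n$-th letter handled separately, and the bookkeeping $k \mapsto k+1$ must be aligned so that precisely $n$ letters are consumed before the two $\mathit{end}$-states are reached.
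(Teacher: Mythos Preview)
Your proof is correct and follows essentially the same approach as the paper, which simply states that the observation is immediate from Points \ref{point:6}--\ref{point:11} of the definition of $\delta_\phi$. You have spelled out in detail the level-by-level induction that underlies this ``immediate'' claim, including the careful handling of the boundary case $j=n$; nothing is missing or divergent.
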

\begin{proof}
Immediate from Points \ref{point:6} to \ref{point:11} of definition of $\delta_{\phi}$.
\end{proof}
\begin{lemma}
\label{lemma:equiv_word}
For any given $\phi$ automaton $\mathcal{A}_{\phi}$ is carefully synchronizing if and only if there exist a word $w$ of length $n$ such that $\delta_{\phi}(S_{init},w) = S_{end}$.
\end{lemma}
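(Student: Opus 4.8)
The plan is to prove both implications by tracking, through the power automaton, what a carefully synchronizing word must do, and using the structural facts already established (Lemmas~\ref{lemma:a_phi_cluster}, \ref{lemma:a_phi_d_undefinded}, and Observation~\ref{observation:n_length_word}). For the ``if'' direction: suppose $w$ has length $n$ and $\delta_\phi(S_{init},w) = S_{end}$. I would exhibit an explicit carefully synchronizing word built from three pieces. First, apply $b$ to $Q$: by Point~\ref{point:2}, $P.b = S_{init}$, and I need to check that $b$ is defined on all of $Q$ except exactly the states $r_i$ and $\bar{c}_i^{end}$; so the first real step is $a$ (to move every $r_i$ into $P$ and every $\bar{c}_i^{end}, c_i^{end}$ into $R$, hence out of the ``$b$-undefined'' set after one more $a$) — more carefully, one applies a bounded prefix of $a$'s to collapse everything into $P\cup S_{init}$-reachable territory, then $b$ sends the $P$-part to $S_{init}$ while the rest is already inside the clause-gadgets. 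The cleanest route: apply $a^{n+2}$ first so that $Q.a^{n+2}$ lands inside $P$ (since each $S_i$ is a path of length $n+3$ ending at $p_i$, and $P$ is the $a$-cycle), giving $Q.a^{n+2} \subseteq P$, hence $Q.a^{n+2}b \subseteq S_{init}$ — actually $Q.a^{n+2} = P$ because the $a$-cycle has length $m$ and $n+2 \ge m$ is not guaranteed, so instead use $a^{L}$ for $L$ the level of $\mathcal{G}_a$ which is $n+2$, giving $Q.a^{n+2} = P$ and then $P.b = S_{init}$. Then apply $w$: $S_{init}.w = S_{end}$ by hypothesis (one must check $\delta_\phi$ stays defined along $w$ on all of $S_{init}$, which follows since inside the clause gadgets both letters are defined at every $c_i^{x_j}, \bar c_i^{x_j}$ level except the very last, and Observation~\ref{observation:n_length_word} guarantees we exit into $\{c_i^{end},\bar c_i^{end}\}$ — but we need it to land in $S_{end}$, i.e. all $c_i^{end}$, which is exactly the hypothesis). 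Finally apply $b$ once: by Point~\ref{point:5}, $c_i^{end}.b = p_1$ for every $i$, so $S_{end}.b = \{p_1\}$, a singleton. Hence $a^{n+2}\, b\, w\, b$ carefully synchronizes $\mathcal{A}_\phi$.

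For the ``only if'' direction: suppose $\mathcal{A}_\phi$ is carefully synchronizing, and let $u$ be a shortest carefully synchronizing word. Since $a$ is the only letter defined on all states (check: $b$ is undefined on $r_i$ and on $\bar c_i^{end}$), $u$ must begin with $a$; more generally, by Lemma~\ref{lemma:a_phi_d_undefinded}, any maximal $a$-prefix at the start has length $\ge n+3$ before $b$ can be applied, and after $a^{n+3}$ (or any $a^{k}$, $k\ge n+2$) we have $Q.a^k = P$. So after the initial block of $a$'s, the current set is $P$, and the next letter must be $b$ (applying $a$ again keeps us at $P$, contradicting minimality), yielding $P.b = S_{init}$. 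Now I claim the word $v$ obtained from $u$ by deleting this prefix $a^k b$ satisfies: the first $n$ letters of $v$, call it $w$, map $S_{init}$ into $\{c_i^{end},\bar c_i^{end} : i\}$ by Observation~\ref{observation:n_length_word}, and in fact must map it into $S_{end}$ — because if some $\bar c_i^{end}$ were in the image, then to ever leave that state one must apply $a$ (since $\delta_\phi(\bar c_i^{end},b)$ is undefined, Point~\ref{point:5}), sending it to $r_i$, and then $b$ is again undefined at $r_i$, forcing another $a$ to $p_i$; so after at most two more $a$'s we are back on the $P$-cycle with potentially several clause-gadgets active at different phases, and a counting/phase argument shows we can never synchronize — OR, more robustly, one shows that reaching a singleton forces the image of $S_{init}$ under the first $n$ letters to be exactly $S_{end}$ by analyzing where the states $c_i^{x_1}$ (which are also in $Q$ and flow forward) end up. The key point is that $S_{end}.b = \{p_1\}$ is the \emph{only} way to collapse the $m$ parallel gadget-paths to a single state, and $S_{end}$ is the unique set of the form $\{d_1^{end},\ldots,d_m^{end}\}$ with $d_i \in \{c_i,\bar c_i\}$ that collapses under $b$.

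\textbf{Main obstacle.} The genuinely delicate step is the ``only if'' direction's claim that the first $n$ letters after the $a^k b$ prefix must send $S_{init}$ precisely onto $S_{end}$, rather than onto some mixed set containing a $\bar c_i^{end}$. I expect to handle this by arguing that the full state set $Q$ contains, besides the $\bar c_i^{x_1}$ reached via $b$ from $P$, also the ``true-branch'' states $c_i^{x_1}$ and all intermediate path states, and these evolve under any word in lockstep along their paths; combined with $|S_{init}|=m$ and the fact that the only letter re-synchronizing a $\bar c_i^{end}$ back toward the cycle is $a$ (costing progress and mis-aligning phases across different $i$), one derives that any successful synchronization must ``commit'' at the $S_{init}\to S_{end}$ stage. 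I would make this rigorous by considering the projection of the current reachable set onto each clause-gadget and showing the gadgets can only be simultaneously collapsed via the $S_{end}\xrightarrow{b}\{p_1\}$ transition, which is available precisely when every active ``end'' state is a $c_i^{end}$. This forces the existence of a length-$n$ word $w$ with $\delta_\phi(S_{init},w)=S_{end}$, completing the equivalence.
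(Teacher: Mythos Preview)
Your ``if'' direction is essentially the paper's argument: collapse $Q$ to $P$ with a power of $a$, apply $b$ to reach $S_{init}$, apply $w$ to reach $S_{end}$, then $b$ to reach $\{p_1\}$. (Your exponent $n+2$ is in fact the correct level of $\mathcal{G}_a$; the paper uses $n+3$, which also works since $P.a=P$.)

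For the ``only if'' direction you have the right skeleton but you are making the key step much harder than it is, and your stated worry about ``different phases'' is unfounded. Once the current set is $S_{init}$ and you read any $n$ letters, Observation~\ref{observation:n_length_word} puts you at a set $T=\{t_1,\dots,t_m\}$ with $t_i\in\{c_i^{end},\bar c_i^{end}\}$. Now simply observe:
\begin{itemize}
  \item if $T\neq S_{end}$ then some $\bar c_i^{end}\in T$, so $\delta_\phi(T,b)$ is undefined (Point~\ref{point:5}); hence the next letter must be $a$, and $T.a=R$ \emph{for the whole set} because both $c_j^{end}.a=r_j$ and $\bar c_j^{end}.a=r_j$ (Point~\ref{point:4});
  \item $\delta_\phi(R,b)$ is undefined (Point~\ref{point:3}), so the next letter is again $a$, and $R.a=P$.
\end{itemize}
Thus from any $T\neq S_{end}$ the \emph{only} defined continuation is $T\xrightarrow{a}R\xrightarrow{a}P$, and then $P\xrightarrow{b}S_{init}$ (any further $a$'s just permute $P$). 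There is no branching, no phase misalignment, no need to track $c_i^{x_1}$ separately, and no projection argument: the set size stays exactly $m$ throughout this loop. Since $u$ eventually reaches a singleton, some pass through this loop must produce $T=S_{end}$, and the $n$ letters read on that pass give the required $w$. This is exactly what the paper does in two lines; your ``Main obstacle'' dissolves once you note that $c_j^{end}$ and $\bar c_j^{end}$ have the \emph{same} $a$-image $r_j$, so no desynchronization between gadgets can occur.
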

\begin{proof}
First assume, that there exists a word $w$ of length $n$ such that $\delta_{\phi}(S_{init},w) = S_{end}$. Observe that $\delta_{\phi}(Q,a^{n+3}) = P$ and $\delta_{\phi}(P,b) = S_{init}$. Also $\delta_{\phi}(S_{init},w) = S_{end}$ and $\delta_{\phi}(S_{end},b) = p_1$ so we conclude that word $u = a^{n+3}bwb$ carefully synchronizes $\mathcal{A}_{\phi}$. Now assume that $\mathcal{A}_{\phi}$ is carefully synchronizing. From Lemma \ref{lemma:a_phi_d_undefinded} we obtain that any carefully synchronizing word for $\mathcal{A}_{\phi}$, say $u \in \{a,b\}^*$, must start with a word $a^{k}$ where $k > n+2$. Since $\delta_{\phi}(Q, a^{n+3}) = P$ and $\delta_{\phi}(P, a) = P$ we imply that $u$ must be of the form $a^kbv$. We know that $\delta_{\phi}(Q, a^kb) = S_{init}$.  From Observation \ref{observation:n_length_word} for any $v_1 \in \{a,b\}^*$ such that $|v_1| = n$ we have $\delta_{\phi}(\bar{c}_i^{x_1},v_1) \in \{c_i^{end},\bar{c}_i^{end}\}$. On the other hand observe that for any set of the form $\{t_1, \ldots, t_n\}$, where $t_i \in \{c_i^{end},\bar{c}_i^{end}\}$ but $S_{end}$, we have that $\delta_\phi(T, b)$ is undefined, $\delta_{\phi}(T,a) = R$ and $\delta_{\phi}(R,b)$ is undefined, and $\delta_{\phi}(R,a) = P$. So, since $u$ carefully synchronizes $a$ there must exist desired word, which concludes the proof.
\end{proof}
\begin{lemma}
\label{lemma:equiv_formula}
Let $\phi$ be a formula and $\mathcal{A}_{\phi}$ be a corresponding automaton. Then $\phi$ has truth assignment if and only if there exists a word $w$ of length $n$ such that $\delta_{\phi}(S_{init},w) = S_{end}$. 
\end{lemma}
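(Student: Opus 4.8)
The plan is to set up an explicit correspondence between words $w \in \{a,b\}^n$ and truth assignments of $\phi$, and then match ``$w$ drives $S_{init}$ to $S_{end}$'' with ``the associated assignment satisfies $\phi$''. To a word $w = w_1 w_2 \cdots w_n$ I associate the assignment $\alpha_w$ given by $\alpha_w(x_j) = \mathit{true}$ if $w_j = a$ and $\alpha_w(x_j) = \mathit{false}$ if $w_j = b$; conversely every assignment arises this way from a unique word of length $n$. By Observation \ref{observation:n_length_word} we know $\delta_\phi(\bar c_i^{x_1}, w) \in \{c_i^{end}, \bar c_i^{end}\}$ for each $i$, and since for distinct $i, i'$ the pairs $\{c_i^{end}, \bar c_i^{end}\}$ and $\{c_{i'}^{end}, \bar c_{i'}^{end}\}$ lie in disjoint clause gadgets, the image $\delta_\phi(S_{init}, w)$ consists of $m$ distinct states, one from each pair. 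Hence $\delta_\phi(S_{init}, w) = S_{end}$ holds if and only if $\delta_\phi(\bar c_i^{x_1}, w) = c_i^{end}$ for every $i \in \{1,\dots,m\}$, and it suffices to prove, for each fixed $i$, the claim that $\delta_\phi(\bar c_i^{x_1}, w) = c_i^{end}$ precisely when $\alpha_w$ satisfies clause $C_i$.

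To prove this claim I would trace the trajectory of $\bar c_i^{x_1}$ under the prefixes of $w$ inside the $i$-th gadget, which is made of the ``false'' chain $\bar c_i^{x_1} \to \cdots \to \bar c_i^{x_n} \to \bar c_i^{end}$, the ``true'' chain $c_i^{x_1} \to \cdots \to c_i^{x_n} \to c_i^{end}$, and the cross-edges of Points \ref{point:6}, \ref{point:7}, \ref{point:9}, \ref{point:10}. The two structural facts driving the argument are: (i) once the trajectory reaches a state $c_i^{x_j}$ of the true chain, by Points \ref{point:8} and \ref{point:11} it remains on the true chain under both letters and therefore ends in $c_i^{end}$; and (ii) from a state $\bar c_i^{x_j}$, reading $w_j$ moves the trajectory onto the true chain exactly when the literal on $x_j$ occurring in $C_i$ (if any) is made true by $\alpha_w$ --- that is, when $x_j \in C_i$ and $w_j = a$, or $\bar x_j \in C_i$ and $w_j = b$ --- and keeps it on the false chain otherwise. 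Fact (ii) is where the standing hypothesis that no clause contains both $x_j$ and $\bar x_j$ is used: it ensures that reading the letter that falsifies the literal of $C_i$ on $x_j$ cannot simultaneously satisfy a second literal on the same variable. A short induction on the position then shows that if $\alpha_w$ satisfies $C_i$, the trajectory leaves the false chain at the first satisfied literal and ends at $c_i^{end}$, whereas if $\alpha_w$ falsifies every literal of $C_i$ it stays on the false chain and ends at $\bar c_i^{end}$ (for the last variable this final step is Points \ref{point:9} and \ref{point:10}).

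Putting the pieces together, a word $w$ of length $n$ satisfies $\delta_\phi(S_{init}, w) = S_{end}$ iff $\delta_\phi(\bar c_i^{x_1}, w) = c_i^{end}$ for all $i$ iff $\alpha_w$ satisfies every clause iff $\alpha_w$ is a satisfying assignment of $\phi$. Reading this chain in both directions yields the lemma: from a satisfying assignment $\alpha$ take the word $w$ with $\alpha_w = \alpha$, and from a word $w$ with $\delta_\phi(S_{init},w)=S_{end}$ take the assignment $\alpha_w$. I expect the only real work to be the bookkeeping behind Fact (ii) and the induction of the second paragraph --- a case split on whether $x_j$ occurs in $C_i$ positively, negatively, or not at all, and, in the two ``occurs'' cases, on which letter is read --- while the global reductions (disjoint gadgets, absorbing true chain, Observation \ref{observation:n_length_word}) are immediate from the construction.
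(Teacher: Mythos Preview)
Your proposal is correct and follows essentially the same approach as the paper: both set up the bijection between length-$n$ words and assignments via $a\leftrightarrow\text{true}$, $b\leftrightarrow\text{false}$, and both argue per clause gadget that the trajectory from $\bar c_i^{x_1}$ reaches $c_i^{end}$ iff the assignment satisfies $C_i$, using that the true chain is absorbing (Points~\ref{point:8},~\ref{point:11}) and that the crossover from the false chain happens exactly when the corresponding literal is satisfied (Points~\ref{point:6},~\ref{point:7},~\ref{point:9},~\ref{point:10}). The only difference is organizational: you package the per-gadget argument as a single biconditional proved by induction on position, whereas the paper treats the two implications separately and handles the backward direction by contradiction; your framing is a bit cleaner but not substantively different.
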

\begin{proof}
First assume that $\phi$ has a truth assignment $e: \{x_1, \ldots, x_n\} \rightarrow \{0, 1\}$. We define a word $w$ of length $n$ in a following way: if $e(x_i) = 1$ in that evaluation then $i$-th letter of the word $w$ is $a$, otherwise the $i$-th letter is $b$. Note $w_i$ as $i-1$ letter prefix of $w$ and $\bar{w}_i$ as a $i-1$ letter suffix of $w$. Let $0 < k_1 <k_2 < k_3 < n+1$ and consider clause $C_j = y_{k_1} \lor y_{k_2} \lor y_{k_3}$ where $y_{k_i} \in \{x_{k_i}, \bar{x}_{k_i}\}$. Since $e$ is a truth evaluation, then at least one of the variables $x_{k_1}$, $x_{k_2}$ or $x_{k_3}$ must evaluate to true in the sense that if $y_{k_i} = x_{k_i}$ then $e(x_{k_i}) = 1$, otherwise $e(x_{k_i}) = 0$. It is straightforward from the definition of $\delta_{\phi}$ (Points \ref{point:6} and \ref{point:7}) that $\delta_{\phi}(\bar{c}_j^{x_1}, w_{k_1}) = \bar{c}_j^{x_{k_1}}$. Now consider four pairwise exclusive cases:
\begin{enumerate}
    \item \label{case:1} $y_{k_1} = x_{k_1}$ and $e(x_{k_1}) = 1$
    \item \label{case:2} $y_{k_1} = x_{k_1}$ and $e(x_{k_1}) = 0$
    \item \label{case:3} $y_{k_1} = \bar{x}_{k_1}$ and $e(x_{k_1}) = 1$
    \item \label{case:4} $y_{k_1} = \bar{x}_{k_1}$ and $e(x_{k_1}) = 0$.
\end{enumerate}
Now:
\begin{itemize}
    \item if case \ref{case:1} holds, then $i$-th letter of $w$ is $a$ and by Point \ref{point:6} we know that $\delta_{\phi}(\bar{c}_j^{x_{k_1}}, a) = c_j^{x_{k_1+1}}$
    \item if case \ref{case:2} holds, then $i$-th letter of $w$ is $b$ and by Point \ref{point:7} we know that $\delta_{\phi}(\bar{c}_j^{x_{k_1}}, b) = \bar{c}_j^{x_{k_1+1}}$
    \item if case \ref{case:3} holds, then $i$-th letter of $w$ is $a$ and by Point \ref{point:6} we know that $\delta_{\phi}(\bar{c}_j^{x_{k_1}}, a) = \bar{c}_j^{x_{k_1+1}}$
    \item if case \ref{case:4} holds, then $i$-th letter of $w$ is $b$ and by Point \ref{point:7} we know that $\delta_{\phi}(\bar{c}_j^{x_{k_1}}, b) = c_j^{x_{k_1+1}}$.
\end{itemize}
If case 1 or 4 hold, then $\delta_{\phi}(\bar{c}_j^{x_1}, w_{i+1}) = c_j^{x_{k_1+1}}$ and, from Points \ref{point:8} and \ref{point:11} we obtain that $\delta_{\phi}(c_j^{x_{k_1+1}}, \bar{w}_{n-k_1}) = c_j^{end}$. Otherwise we obtain that $\delta(\bar{c}_j^{x_1}, w_{k_2}) = \delta(\bar{c}_j^{x_{k_2}})$ and we can repeat our case analysis to obtain that either $\delta(\bar{c}_j^{x_1}, w_{k_2 + 1}) = c_j^{x_{k_2+1}}$ (if $y_{k_2}$ evaluates to true) or  $\delta(\bar{c}_j^{x_1}, w_{k_2 + 1}) = \bar{c}_j^{x_{k_2+1}}$ (if $y_{k_2}$ evaluates to false). With the same argument applied to $y_{k_3}$, since at least one of the variables of $C_j$ must be evaluated to true, we obtain that $\delta_{\phi}(\bar{c}_j^{x_1}, w) = c_j^{end}$. But since $e$ is a truth evaluation, then any $C_j$ must have at least one variable evaluated to true, so for any $j \in \{1, \ldots, m\}$ we have that $\delta_{\phi}(\bar{c}_j^{x_1},w) = c_j^{end}$, so we obtain that $\delta(S_{init}, w) = S_{end}$.\newline 
Now assume that there exist a word $w$ of length $n$, such that $\delta(S_{init}, w) = S_{end}$. We define the evaluation of $\phi$, $e_w: \{x_1, \ldots, x_n\} \rightarrow \{0,1\}$ in the following way: if $i$-th letter of $w$ is $a$, then $e_w(x_i) = 1$ otherwise $e_w(x_i) = 0$. Since $\delta_{\phi}(S_{init}, w) = S_{end}$, then from Observation \ref{observation:n_length_word} for any $j \in \{1, \ldots, m\}$ holds $\delta_{\phi}(\bar{c}_j^{x_1}, w) = c_j^{end}$. Consider evaluation $e_w$ and let $0 < k_1 < 
k_2 < k_3 < n+1$ and clause $C_j = y_{k_1} \lor y_{k_2} \lor y_{k_3}$ where $y_{k_i} \in \{x_{k_i}, \bar{x}_{k_i}\}$. It suffices to prove that for any $j$ if $\delta_{\phi}(\bar{c}_j^{x_1}, w) = c_j^{end}$, then $e_w(y_{k_1}) = 1$ or $e_w(y_{k_1}) = 1$ or $e_w(y_{k_1}) = 1$ (we assume here that $e(\bar{x}) = 1 - e(x)$). For the sake of contradiction suppose, that there exist $j \in \{1,\ldots, m\}$ such that $\delta_{\phi}(\bar{c}_j^{x_1}, w) = c_j^{end}$ and $e_w(y_{k_1}) = 0$ and $e_w(y_{k_2}) = 0$ and $e_w(y_{k_3}) = 0$. First observe that $\delta_{\phi}(\bar{c}_j^{x_1}, w_{k_1}) = \bar{c}_{j}^{x_{k_1}}$. Since $e(y_{k_1}) = 0$, then (from Points \ref{point:6} and \ref{point:7}) we obtain that $\delta_{\phi}(\bar{c}_j^{x_1}, w_{k_1 + 1}) = \bar{c}_{j}^{x_{k_1 + 1}}$. The same argument for $y_2$ gives us $\delta_{\phi}(\bar{c}_j^{x_1}, w_{k_2 + 1}) = \bar{c}_{j}^{x_{k_2 + 1}}$ (from Points \ref{point:6} and \ref{point:7}) and further for $y_3$ $\delta_{\phi}(\bar{c}_j^{x_1}, w_{k_3 + 1}) = \bar{c}_{j}^{x_{k_3 + 1}}$ or $\delta_{\phi}(\bar{c}_j^{x_1}, w_{k_3 + 1}) = \bar{c}_{j}^{end}$ (respectively from Points \ref{point:6} and \ref{point:7} or \ref{point:10} and \ref{point:11}). In the latter case we obtain contradiction straightforward and in the former we notice, that for any $v \in \{a,b\}^*$ such that $|v| = n - k_3$ we obtain that $\delta_{\phi}(\bar{c}_j^{x_{k_3}}, v) = \bar{c}_{j}^{end}$ (from Points \ref{point:6}, \ref{point:7},\ref{point:10} and \ref{point:11} and observation that any variable with index greater than $k_3$ does not belong to the clause $C_j$). That concludes the proof.
\end{proof}
Combining Lemmas \ref{lemma:equiv_word} and \ref{lemma:equiv_formula} we obtain that automaton $\mathcal{A}_{\phi}$ is carefully synchronizing if, and only if formula $\phi$ has a truth evaluation. Taking Lemma \ref{lemma:a_phi_cluster} into account, the only thing we have to prove in order to finish the proof of Theorem \ref{theorem:np_hardness} is that the reduction can be performed in polynomial time. Observe that, for any $\phi$ with $n$ variables and $m$ clauses and corresponding $\mathcal{A}_{\phi}$ we have that $|P| = |R| = m$ and for any $i \in \{1, \ldots, m\}$ holds also $|S_{C_i^t}| = |S_{C_i^f}| = n+1$, we obtain, that $|Q| = 2m(n+2)$ and on the other hand $\mathcal{A}_{\phi}$ is a partial automaton, what means, that for any state there are at most two outgoing transitions ($a$ and $b$), so computing automaton $\mathcal{A}_{\phi}$ can be done in polynomial size of $n$ and $m$ and that concludes the proof of Theorem \ref{theorem:np_hardness}.
\section{Conclusions}
In Section \ref{section:lower_bounds} we have found infinite family of PFAs with shortest carefully synchronizing words of exponential length. However the size of the alphabet used in the construction is also exponential in terms of the number of states and that makes the construction not sufficient to make analysis of decision problem as stated in Section \ref{section:complexity}. An interesting problem to investigate is if one can achieve exponentially long shortest carefully synchronizing word for one-cluster PFAs, but using smaller alphabet size.\newline
In Section \ref{section:complexity} we have proven NP-hardness of the problem of deciding whether a given binary one-cluster PFAs can be carefully synchronized. Let $\mathcal{A}$ be a one cluster PFA with respect to a letter $a$ and let $C$ be a set of states that induces a cycle on letter $a$. Further analysis of the proof of NP-hardness leads to a conclusion that even the problem of deciding whether there exist a word $w$ such that $|C.w| < |C|$ is also NP-hard. Let us make a simple observation:
\begin{observation}
\label{observation:witness}
Let $\mathcal{A} = (Q, \Sigma, \delta)$ be a PFA, and $w \in \Sigma^*$. There exist an algorithm that, in $O(|w||Q|)$ time complexity decides whether $w$ is a carefully synchronizing word for $\mathcal{A}$.
\end{observation}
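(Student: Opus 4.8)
The plan is to simulate the run of the power automaton $\mathcal{P}(\mathcal{A})$ from the full set $Q$ along $w$, exploiting the fact that the image set never grows. Write $w = a_1 a_2 \cdots a_\ell$ with $\ell = |w|$, and maintain the current set $S_i = Q.(a_1\cdots a_i)$ as a Boolean characteristic vector indexed by $Q$, initialised with $S_0 = Q$. We assume $\delta$ is given as a lookup table, so that the query ``is $\delta(q,a)$ defined, and if so, what is its value'' costs $O(1)$.

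In step $i$ (for $i = 1, \ldots, \ell$) we compute $S_i$ from $S_{i-1}$ and $a_i$ as follows. Clear a fresh characteristic vector $V$. For every $q$ with $S_{i-1}[q] = 1$, look up $\delta(q,a_i)$: if it is undefined, halt and report that $w$ is \emph{not} carefully synchronizing (by definition $\tau(S_{i-1},a_i)$ is then undefined, so there is no labelled path from $Q$ to a singleton along $w$); otherwise set $V[\delta(q,a_i)] = 1$. After the loop put $S_i := V$. Once all $\ell$ letters have been processed without meeting an undefined transition, scan $S_\ell$ and accept iff it contains exactly one $1$. Correctness follows from Fact~\ref{fact:1} applied to the single path labelled by $w$: acceptance happens precisely when $Q.w$ is a singleton and every transition used along the way is defined, which is exactly what it means for $w$ to carefully synchronize $\mathcal{A}$.

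For the running time: clearing $V$ and the final scan each cost $O(|Q|)$; the inner loop of step $i$ ranges over the at most $|Q|$ members of $S_{i-1}$ doing $O(1)$ work each, so step $i$ costs $O(|Q|)$. Summing over the $\ell = |w|$ steps gives $O(|w|\,|Q|)$ overall, as claimed. (One may avoid re-clearing $V$ by keeping two buffers and clearing only the positions written in the previous step; this does not change the asymptotics.)

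The only point requiring care — and the closest thing here to an obstacle — is the data-structure bookkeeping that keeps each letter step at $O(|Q|)$ rather than $O(|Q|^2)$: one must iterate only over the current set's members (never over all of $2^Q$, nor over all pairs of states), use random-access lookups for $\delta$, and represent sets so that membership tests, insertions, and the singleton test are all cheap. Everything else is routine.
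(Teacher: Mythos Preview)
Your proof is correct and follows the same approach as the paper's: both simulate the image $Q.w$ letter by letter, checking for undefined transitions along the way and testing whether the final set is a singleton. Your version simply supplies the data-structure details (characteristic vectors, $O(1)$ lookups) that make the $O(|Q|)$ per-letter cost explicit, whereas the paper leaves this implicit in its pseudocode.
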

\begin{proof}
Consider algorithm defined below:
\begin{algorithmic}
\State $P \gets Q$
\For{$i=0,i < |w|,i \gets i+1$} 
    \State $P \gets P.w[i]$
\EndFor
\If{$|P| = 1$}
\State \Return true
\Else
\State \Return false
\EndIf
\end{algorithmic}
Obviously algorithm answers true if and only if $w$ carefully synchronizes $\mathcal{A}$ and requires $O(|Q|)$ additional space complexity and $O(|Q||w|)$ time complexity, what concludes the proof.
\end{proof}
From Observation \ref{observation:witness} we can infer that one possibility to prove that the problem 2-ONE-CLUSTER-CARSYNC is in NP is to prove that the shortest carefully synchronizing word for binary one-cluster automata is of polynomial length, a problem that we what we want to investigate in further research. Lemma \ref{lemma:shrink_by_half} can be utilized in the proof of that result. One must show first that if there exist a word $w$ such that $|C.w| < C$ where $C$ is the set of states inducing a cycle on the letter $a$, then such a word is of polynomial length in terms of the set of states. Next question is how to synchronize the remaining half of the states of the cycle.

\end{document}